\newcommand{\CC}{\mathcal{C}}
\newcommand{\ket}[1]{|#1\rangle} 
\newcommand{\oline}[1]{\overline{#1}}
 \newtheorem{lemma}{Lemma}
\newtheorem{corollary}{Corollary} 
\newtheorem{claim}{Claim}
\begin{document}

\title{Stacked codes: universal fault-tolerant quantum computation in a two-dimensional layout}

\author{Tomas Jochym-O'Connor}
\affiliation{
    Institute for Quantum Computing and Department of Physics and Astronomy,
    University of Waterloo,
    Waterloo, Ontario, N2L 3G1, Canada
    }
\author{Stephen~D. Bartlett}
\affiliation{
	Centre for Engineered Quantum Systems, School of Physics, The University of Sydney, Sydney, NSW 2006, Australia
	}

\begin{abstract}
We introduce a class of three-dimensional color codes, which we call \emph{stacked codes}, together with a fault-tolerant transformation that will map logical qubits encoded in two-dimensional~(2D) color codes into stacked codes and back. The stacked code allows for the transversal implementation of a non-Clifford $\pi/8$~logical gate, which when combined with the logical Clifford gates that are transversal in the 2D color code give a gate set that is both fault-tolerant and universal without requiring nonstabilizer magic states. We then show that the layers forming the stacked code can be unfolded and arranged in a 2D layout.  As only Clifford gates can be implemented transversally for 2D topological stabilizer codes, a nonlocal operation must be incorporated in order to allow for this transversal application of a non-Clifford gate.  Our code achieves this operation through the transformation from a 2D color code to the unfolded stacked code induced by measuring only geometrically local stabilizers and gauge operators within the bulk of 2D color codes together with a nonlocal operator that has support on a one-dimensional boundary between such 2D codes. We believe that this proposed method to implement the nonlocal operation is a realistic one for 2D stabilizer layouts and would be beneficial in avoiding the large overheads caused by magic state distillation.
\end{abstract}

\pacs{03.67.Pp,03.67.Lx}

\maketitle

\section{Introduction}

Quantum error correction is a necessary tool for the suppression of logical error rates, enabling sufficiently long coherence times for logical computations. Among the most promising quantum coding architectures are two-dimensional~(2D) local topological stabilizer codes. These are stabilizer codes where each stabilizer measurement couples qubits that are geometrically local on a 2D lattice.  Such schemes are favored due to their relative experimental simplicity of arranging and measuring local stabilizers, typical high error threshold rates, and the ability to vary the distance of the code in a smooth manner rather than through jumps as in concatenated coding schemes. 

To perform universal quantum computation, a fault-tolerant architecture must specify not only a quantum code but also a means to implement a universal set of quantum logic gates.  The most desirable form of logical operation is a transversal gate, that is, a gate where each physical qubit of the code is transformed independently and identically, ensuring that there is no coupling between the different qubits in the code and thereby restricting the propagation of errors.  Unfortunately, there are no quantum codes that allow for the implementation of a universal logical gate set using only transversal gates, as shown by Eastin and Knill~\cite{EK09}.  

The set of transversal gates is even more restricted when considering 2D topological stabilizer codes: Only Clifford gates (a non-universal and classically efficiently simulatable gate set) can be implemented transversally, as originally shown by Bravyi and K\"onig~\cite{BK13} for 2D topological stabilizer codes and subsequently generalized to 2D topological subsystem codes~\cite{PY15}. As an example, 2D color codes are local topological stabilizer codes that have many interesting properties, including transversal logical Hadamard and phase gates~\cite{BM07}, a distinct advantage not shared by the 2D toric code. Additionally, they possess a transversal Controlled-NOT~(CNOT) gate as they are in the CSS code family and as such can implement any Clifford gate transversally.  Unfortunately, due to the restrictions described above, they do not possess a transversal logic gate outside of the Clifford group. Traditional techniques to bypass this problem and obtaining a fault-tolerant non-Clifford gate rely on preparing a special ancillary state~\cite{BK05}, called a magic state, which can lead to large ancilla qubit overhead~\cite{FMMC12}.  

A recent avenue of research for addressing these limitations is to consider the interplay between 2D and three-dimensional~(3D) topological stabilizer codes.  The basis of this approach is a technique for sidestepping the Eastin--Knill no-go theorem through the use of gauge operator measurements to transform from one stabilizer code, with its set of transversal logic gates, to another stabilizer code with a different set of transversal gates~\cite{PR13}; see also~\cite{ADP14}. Applied to topological stabilizer codes, one approach involves mapping a 2D color code to a 3D color code by performing an appropriate set of gauge stabilizer measurements between the 2D code and a specially prepared 3D code ancilla state~\cite{PR13, Bombin14}. The mapping of the quantum information into a 3D color code allows for the application of a transversal $\pi/8$ gate~\cite{Bombin15} (often referred to as the $T$~gate), which is a non-Clifford gate, thus completing the universal gate set. A drawback of such a method is that the required operations are geometrically local only in three dimensions, which may be incompatible with some experimental approaches.  

In this paper, we present a method for fault-tolerantly performing a universal set of quantum logic gates within a 2D architecture.  Our method translates between error correcting codes---a 2D color code, and a special class of 3D color code---to allow for the transversal application of different sets of logical gates.  Specifically, we present a mapping from 2D color codes to a 3D code, which we call the \emph{stacked code}, by pairing multiple copies of the 2D color code, generalizing the work of Ref.~\cite{ADP14}.  Multiple 2D color codes can be pairwise stacked in this manner to increase the overall distance of the newly created stacked code to equal the distance of the 2D color code.  We show that the stacked code admits a transversal $\pi/8$ logic gate, and that the transformation from the 2D color code to the stacked code and back can be performed fault-tolerantly.  Furthermore, by unfolding the stacked code and tiling the original 2D color codes in a 2D layout, this code maintains its properties.   The transformation from 2D color code to stacked code in this 2D layout can be performed with a sequence of local gauge measurements in the bulk of the 2D color codes and Bell pairing measurements along the boundary of neighboring 2D codes. In order to not violate the Bravyi--K\"onig no-go theorem, the measurements pairing the different 2D color codes are necessarily nonlocal, but in a very limited way.  Specifically, these measurements can be performed along one-dimensional~(1D) strips forming the boundary between neighboring 2D codes in a 2D arrangement.

A recent result by Bravyi and Cross~\cite{BC15} presents a very similar construction to the one we present here.  Specifically, they detail a fault-tolerant 2D construction for universal quantum computation that relies on the same type of pairing of 2D color codes (which they call doubled color codes) and measurements between the different layers of color codes, as we propose, to implement a gate outside the Clifford group. Importantly, the results of Ref.~\cite{BC15} go beyond our construction to show how to reduce the nonlocal joint logical Pauli operators that have to be measured into a sequence of local measurements by proposing a ``subdivision gadget." They further supplement their findings by proposing a decoding method to address for the correlated noise that is introduced by the action of the non-Clifford $\pi/8$~gate. In addition, Jones, Brooks, and Harrington recently proposed a method to implement a similar form of construction for the $[4.8.8]$~color code~\cite{JBH15}, as opposed to the hexagonal color code studied here and in Ref.~\cite{BC15}. In their construction, they propose a method for measuring the set of nonlocal joint logical Pauli operators through a series a local measurements inspired by lattice surgery methods~\cite{HFDvM12, LR14}. Our results complement those of Refs.~\cite{BC15, JBH15} by providing an explicit presentation of the properties of this 2D structure as a type of 3D color code with stabilizers that can be inferred by measurements only of local 2D stabilizers and gauge operators, as well as weight-$O(d)$ 1D operators on the boundaries of 2D codes.

Our paper is structured as follows. In Sec.~\ref{sec:TransformingCodes}, we introduce the stacked code and demonstrate how to transform from the 2D color codes to this stacked code. We prove minimum error distance and other properties of the stacked code, and we show how to implement the non-Clifford logical gate fault-tolerantly by transformating from a 2D color code to the stacked code and back.  In Sec.~\ref{sec:quasi2D}, we show how to unfold the stacked code into a 2D layout, and present a method to implement the transversal $\pi/8$ logic gate by an appropriate set of local 2D gauge measurements and nonlocal 1D strip measurements. We also compare our scheme to the recent result of Bravyi and Cross.  Finally, in Sec.~\ref{sec:Distances}, we present a theoretical argument measuring the degree of nonlocality of our operations with respect to a higher-distance 2D code.  Some brief concluding remarks are given in Sec.~\ref{sec:Conclusion}, and details on transversal gate operations in the stacked code are relegated to the appendix.

\section{Transforming to the stacked code}
\label{sec:TransformingCodes}

In this section, we describe a transformation to map the logical qubit encoded in a 2D color code into a particular form of 3D color code, which we call a \emph{stacked code}.  This stacked code will allow for the transversal implementation of a logical $\pi/8$~gate (defined by $\text{diag}[e^{-i \pi/8}, e^{i \pi/8}]$), which together with the transversal logical gates in the 2D color code form a universal gate set. We introduce this transformation by generalizing the technique of Anderson~\textit{et al.}~\cite{ADP14}, which mapped a seven-qubit Steane code (also a $d=3$ 2D color code) to a 15-qubit quantum Reed-Muller code (also a $d=3$ 3D color code).   Our generalization applies to hexagonal color codes of any distance, and gives rise to a 3D color code of distance $d=3$. We then show to further generalize this transformation to yield a stacked code with arbitrary distance~$d$.

\subsection{Transforming 2D color codes to 3D: distance~3 protection}

Consider a $[[n,1,d]]$ hexagonal color code family~\cite{BM07b}, with $n = (3d^2 + 1)/4$, defined by $X$ and~$Z$ stabilizer generators expressed as plaquette operators $G_{P_i} = \otimes_{\nu \in P_i} X_\nu$ and $H_{P_i} = \otimes_{\nu \in P_i} Z_\nu$, where the tensor product is over vertices $\nu$ defining a hexagonal plaquette $P_i$, with appropriate modification at the boundaries.  Our construction will use multiple copies of such codes with stabilizer generators $\{ G_{P_i}^{(l)} \}$ and $\{ H_{P_i}^{(l)} \}$, where $l$~is a label for the particular copy of the 2D color code.  For any such code, one can identify a set of weight-$2$ $Z$-type edge operators~$\{H_{e_i}^{(l)} \}$, see Fig.~\ref{fig:GaugeEdge}, that will, along with the $Z$-type plaquette operators, generate any $Z$-type edge in the 2D lattice. We label these edges by~$e_i$, as they can be identified in a one-to-one correspondence with plaquette operators labeled by~$P_i$. Given such a generating set~$\{H_{e_i}^{(l)} \}$, one can identify each $X$~plaquette generator~$ G_{P_i}^{(l)} $ with a particular $Z$~edge operator~$H_{e_i}^{(l)}$ such that this pair of operators  anti-commute, as they will intersect at only one site. 

\begin{figure}
\centering
\begin{subfigure}{0.2\textwidth}
\includegraphics[width=0.5\textwidth]{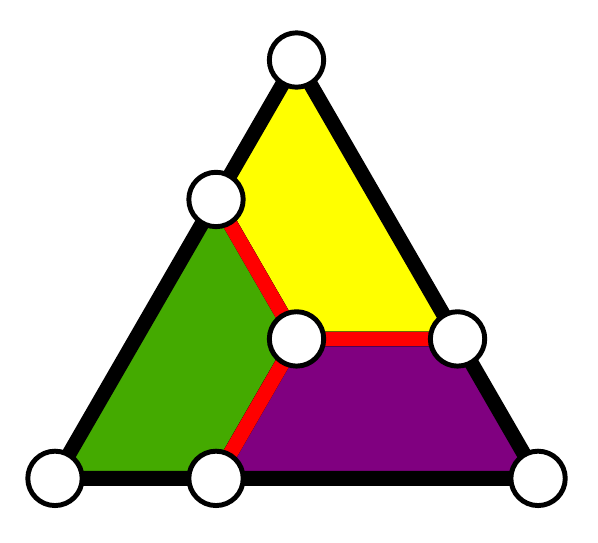}
\caption{}
\label{fig:2D_d3}
\end{subfigure}
\begin{subfigure}{0.25\textwidth}
\includegraphics[width=0.8\textwidth]{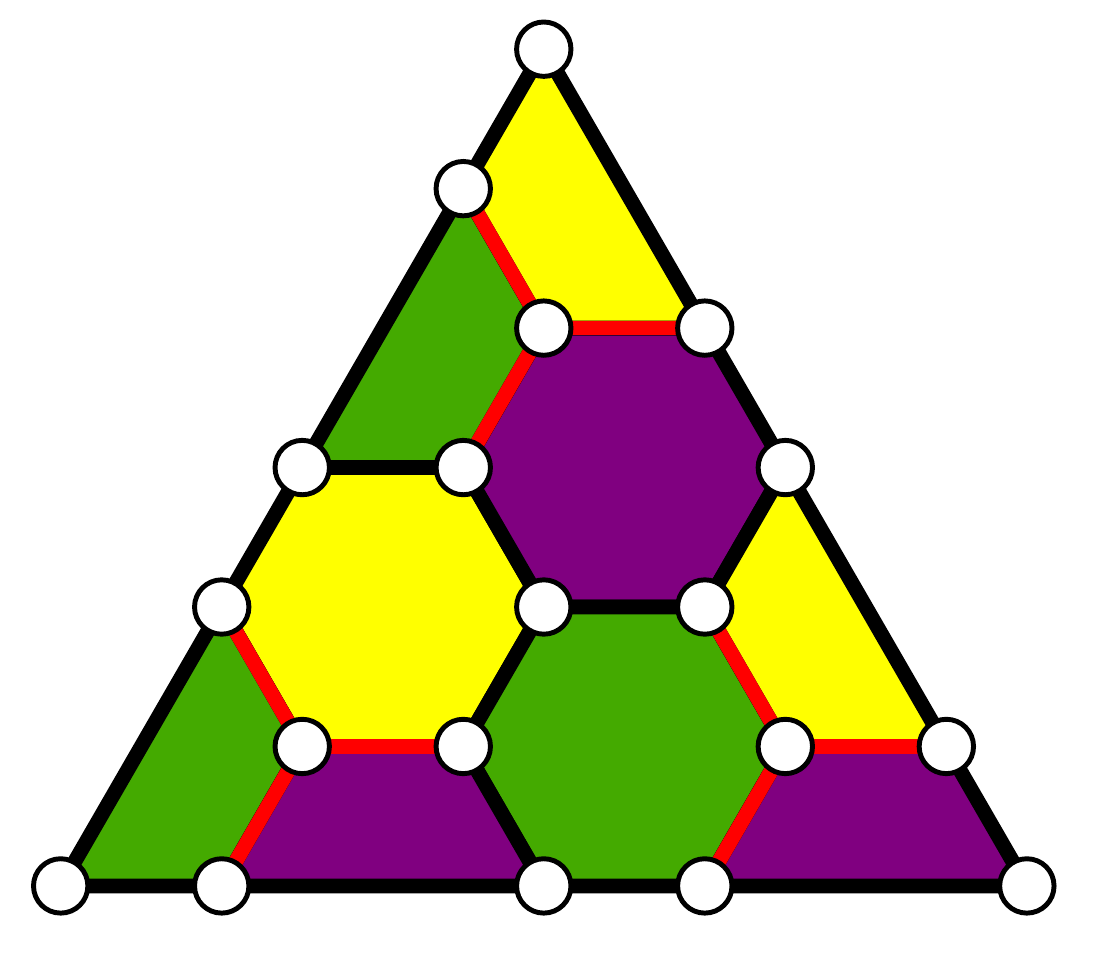}
\caption{}
\label{fig:2D_d5}
\end{subfigure}
\caption{Two instances of the 2D hexagonal color code of distance~(\subref{fig:2D_d3})~$d=3$ and~(\subref{fig:2D_d5})~$d=5$. In each case, a set of independent edges~$\{ H_{e_i} \}$, shown in red, can be chosen as the set that will form the $Z$~gauge operators when paired with the identical edge from another code copy, thus forming weight-4 gauge operators~$\{ H_{e_i}^{(2k-1)} H_{e_i}^{(2k)} \}$.}
\label{fig:GaugeEdge}
\end{figure}

Consider a logical qubit encoded in a 2D hexagonal color code labeled $l=1$ of distance~$d$, with stabilizer generators $\{ G_{P_i}^{(1)} \}$ and~$\{ H_{P_i}^{(1)} \}$.  We now consider a process by which we transform this 2D code into a 3D code, following the method of Anderson~\textit{et al.}~\cite{ADP14}.  Our transformation makes use of a second 2D color code of equivalent size to the first, with its encoded logical qubit entangled in a Bell state with a single ancilla qubit. That is, denoting the logical operators for the second color code labeled $l=2$ by $X^{(2)}_L$ and $Z^{(2)}_L$, and the operators for a single ancilla qubit by $X$ and $Z$, this Bell state is stabilized by $X_L^{(2)} \otimes X$ and $Z_L^{(2)} \otimes Z$ as well as the code stabilizers $\{ G_{P_i}^{(2)} \}$ and~$\{ H_{P_i}^{(2)} \}$.

We induce the transformation through joint measurement of gauge operators of the two color codes.  Specifically, we measure the $Z$-type gauge operators between the two codes corresponding to pairing up the generating $Z$-type edge operators of the two codes and jointly measuring the corresponding weight-4 operators~$\{ H_{e_i}^{(1)}  \otimes  H_{e_i}^{(2)} \}$.  Because each of the original $X$-type plaquette operators of the two codes~$G_{P_i}^{(l)}$ anti-commute with the measured gauge operator~$H_{e_i}^{(1)}  \otimes  H_{e_i}^{(2)}$, they will no longer be stabilizers of the code. However, the joint volume operator~$G_{P_i}^{(1)}  \otimes  G_{P_i}^{(2)}$ obtained by pairing corresponding plaquette operators between the two code copies will remain a stabilizer as it has even overlap with the gauge operator~$H_{e_i}^{(1)}  \otimes  H_{e_i}^{(2)}$. As a result of these measurements, the evolution of the stabilizers for the entire system is given by:  

\begin{tabular}{p{4cm}||p{4cm}}
\centering 2D code + ancilla Bell
{\begin{align}
& \{ G_{P_i}^{(1)} \} \otimes I^{\otimes n} \otimes I  \nonumber \\    
&\{ H_{P_i}^{(1)} \} \otimes I^{\otimes n} \otimes I \nonumber \\
&\{ G_{P_i}^{(1)}  \otimes  G_{P_i}^{(2)} \} \otimes I \nonumber \\
&\{ H_{P_i}^{(1)} \otimes  H_{P_i}^{(2)} \} \otimes I \nonumber \\
&I^{\otimes n} \otimes X_L^{(2)} \otimes X \nonumber \\
&I^{\otimes n} \otimes Z_L^{(2)} \otimes Z \nonumber 
\end{align}}
&
\centering 3D code
{\begin{align}
&\{ H_{e_i}^{(1)}  \otimes  H_{e_i}^{(2)} \} \otimes I \\    
&\{ H_{P_i}^{(1)} \} \otimes I^{\otimes n} \otimes I \\
&\{ G_{P_i}^{(1)}  \otimes  G_{P_i}^{(2)} \} \otimes I \label{eq:Xcube}\\
&\{ H_{P_i}^{(1)} \otimes  H_{P_i}^{(2)} \} \otimes I  \label{eq:Zcube} \\
&I^{\otimes n} \otimes X_L^{(2)} \otimes X  \label{eq:XBell} \\
&I^{\otimes n} \otimes Z_L^{(2)} \otimes Z \label{eq:ZBell}
\end{align}  } \\ \end{tabular} 
where the last two stabilizers represent those corresponding to the second code copy being prepared in a Bell pair with an ancilla qubit.  We note that choosing the smallest nontrivial 2D color code, corresponding to $n=7$ and $d=3$ and equivalent to the seven-qubit Steane code, this mapping corresponds to that of Anderson~\textit{et al.}~\cite{ADP14} in this case.  Even though in general the 2D codes used in this construction are of distance~$d$, the overall distance of transformed code is limited to be~3. Logical~$Z$ string operators are formed by matching pairs of qubits from the two copies of the 2D codes along with the single ancilla qubit, and take the form~$Z_i^{(1)} Z_i^{(2)} Z$.  A higher weight logical~$Z$ operator can be obtained by traversing the 2D color code layers and connecting error strings of different colors. We shall expand upon this point for the general case in Sec.~\ref{sec:3Dproperties}.

This new code is a 3D color code, where the 3D code stabilizers of Eqs.~\eqref{eq:XBell}--\eqref{eq:ZBell} correspond to the stabilizers of the fourth color and the boundary of the new color corresponds to the original 2D code. We prove that it is a 3D color code, and determine its distance in the general case, in Sec.~\ref{sec:3Dproperties}.  The code possesses a transversal~$\pi/8$~gate, as proven in Appendix~\ref{app:GateTransversality} in a similar manner to the techniques proposed in Refs.~\cite{BM07, BH12, KB15}, and will therefore form a universal fault-tolerant gate set along with the logical Clifford gates that can be applied transversally to the original 2D code.\footnote{The transversal gates are not strictly transversal, that is all the same rotation, for the hexagonal color code. However, by applying the inverse rotation to the appropriate set of qubits the correct logical operator can be applied~\cite{Bombin15, KB15}.}  

However, this code has a number of undesirable features from the perspective of topological stabilizer codes.  First, we note that the stabilizers in Eqs.~(\ref{eq:XBell})-(\ref{eq:ZBell}) are very high weight, having support on the entire set of qubits across a full 2D layer.  We postpone discussion about how one might infer the values of these high-weight stabilizers using only lower-weight measurements to Sec.~\ref{sec:FTPi8}.  Second, the distance of this 3D code is limited by the width of the third dimension (two~layers + one~ancilla qubit). This limitation is in line with the intuition behind the no-go result of Bravyi and K\"onig~\cite{BK13}, where it is shown that a topological stabilizer code must be at least dimension 3 or higher to possess a transversal gate operation that lies outside the Clifford group.  One might suspect that the fault-tolerance protection that one should get from the distance of the code should be related to the depth of the third dimension of the code.

\subsection{Transforming 2D color codes to 3D: distance~$d$ protection}

To increase the distance of our newly formed code, we must increase the width of its third dimension. A natural method to provide such added protection would be to encode the weakest part of the code, the bare ancilla qubit, into a 3D code of its own using the exact same technique. We can continue this process  recursively, by performing the joint stabilizer measurements in~\eqref{eq:XBell}--\eqref{eq:ZBell} as joint logical~$X$ and $Z$~measurements. The encoded ancilla state will be prepared offline using 2D color codes arranged as layers in a stack, coupled into logical Bell pairs by performing joint logical~$X$ and $Z$~measurements, henceforth referred to as \emph{Bell stabilizers}. This bulk ancilla state will allow us to transform our 2D color code into a 3D color code with large distance.  In addition, as the individual components forming the bulk ancilla state are restricted to pairs of 2D layers, this will allow us to show in Sec.~\ref{sec:quasi2D} that such a process can be made fault-tolerant on a 2D lattice.

Specifically, our recursive transformation from a 2D color code on layer $k=1$ to a $d$-layer stack is defined by the following evolution of stabilizers: 

\begin{tabular}{p{4cm}||p{4cm}}
\centering 2D code + ancilla Bell 
{\begin{align}
&\{ G_{P_i}^{(2k-1)} \} \nonumber \\    
&\{ H_{P_i}^{(2k-1)} \} \nonumber \\
&\{ G_{P_i}^{(2k-1)}  G_{P_i}^{(2k)} \} \nonumber \\
&\{ H_{P_i}^{(2k-1)}  H_{P_i}^{(2k)} \} \nonumber \\
& X_L^{(2k)} X_L^{(2k+1)} \nonumber \\
& Z_L^{(2k)} Z_L^{(2k+1)} \nonumber 
\end{align}}
&
\centering 3D code
{\begin{align}
& \{ H_{e_i}^{(2k-1)}  H_{e_i}^{(2k)} \} \\    
&\{ H_{P_i}^{(2k-1)} \} \\
&\{ G_{P_i}^{(2k-1)}  G_{P_i}^{(2k)} \} \label{eq:XcubeGen}\\
&\{ H_{P_i}^{(2k-1)} H_{P_i}^{(2k)} \} \label{eq:ZcubeGen} \\
& X_L^{(2k)} X_L^{(2k+1)}  \label{eq:XLBell} \\
& Z_L^{(2k)} Z_L^{(2k+1)} \label{eq:ZLBell}
\end{align} } \\ \end{tabular} 
where $k \in \{1, \cdots, \frac{d-1}{2} \}$.  As the final layer is a single qubit, we have $X_L^{(d)} = X$, and $Z_L^{(d)}=Z$. The logical qubit is initially stored in the first 2D~color code layer, stabilized by the operators~$\{ G_{P_i}^{(1)} \}$~and~$\{ H_{P_i}^{(1)} \}$. The additional layers are prepared in joint Bell pairs, as indicated by the Bell stabilizers~$X_L^{(2k)} X_L^{(2k+1)}$ and~$Z_L^{(2k)} Z_L^{(2k+1)}$. The pairs of copies of the 2D sheets are then coupled together by measuring the gauge operators~$\{ H_{e_i}^{(2k-1)}   H_{e_i}^{(2k)} \}$ between one sheet and another sheet from a different pair. This is logically equivalent to stacking the different pairs to form one large stack of height distance~$d$, where each layer is a copy of a 2D color code also with distance~$d$, as shown in Fig.~\ref{fig:3Dstack}.  We call the resulting 3D code the \emph{$(d-1)+1$ stacked code}. At this point, the Bell stabilizers in Eqns.~\eqref{eq:XLBell}--\eqref{eq:ZLBell} have a cell-like structure connecting the two 2D color code sheets with which they are associated. These correspond to the Blue stabilizers in Fig.~\ref{fig:3Dstack} and will have particular features when viewing this code as a 3D color code, as we explore in the next section, as well as several properties needed to make our 2D arrangement of this code in Sec.~\ref{sec:2Dlayout}.

\begin{figure*}
\begin{center}
\begin{subfigure}{0.4\textwidth}
\includegraphics[width = \textwidth]{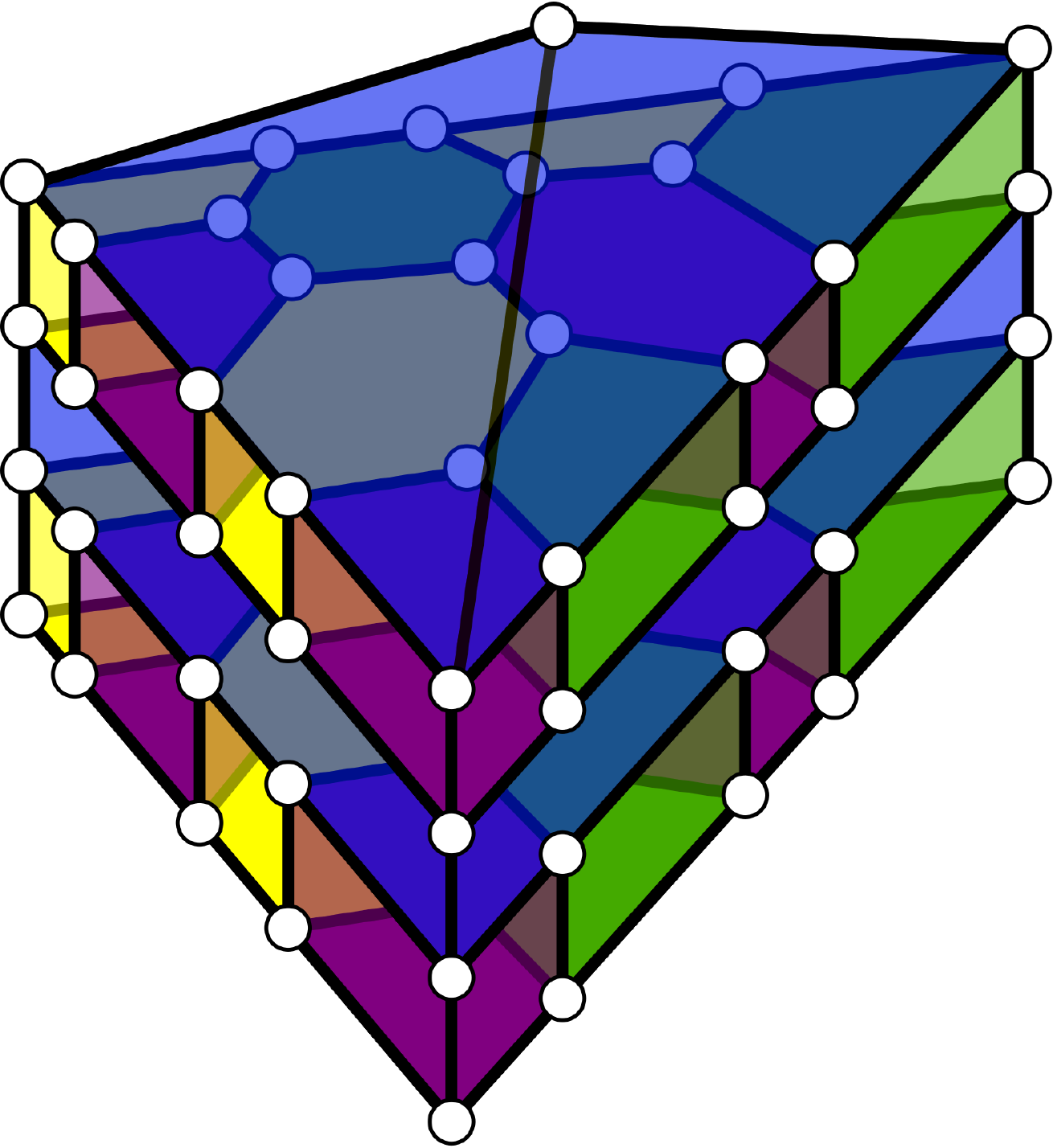}
\caption{}
\label{fig:Primal_3D}
\end{subfigure}
\begin{subfigure}{0.4\textwidth}
\includegraphics[width = 0.66\textwidth]{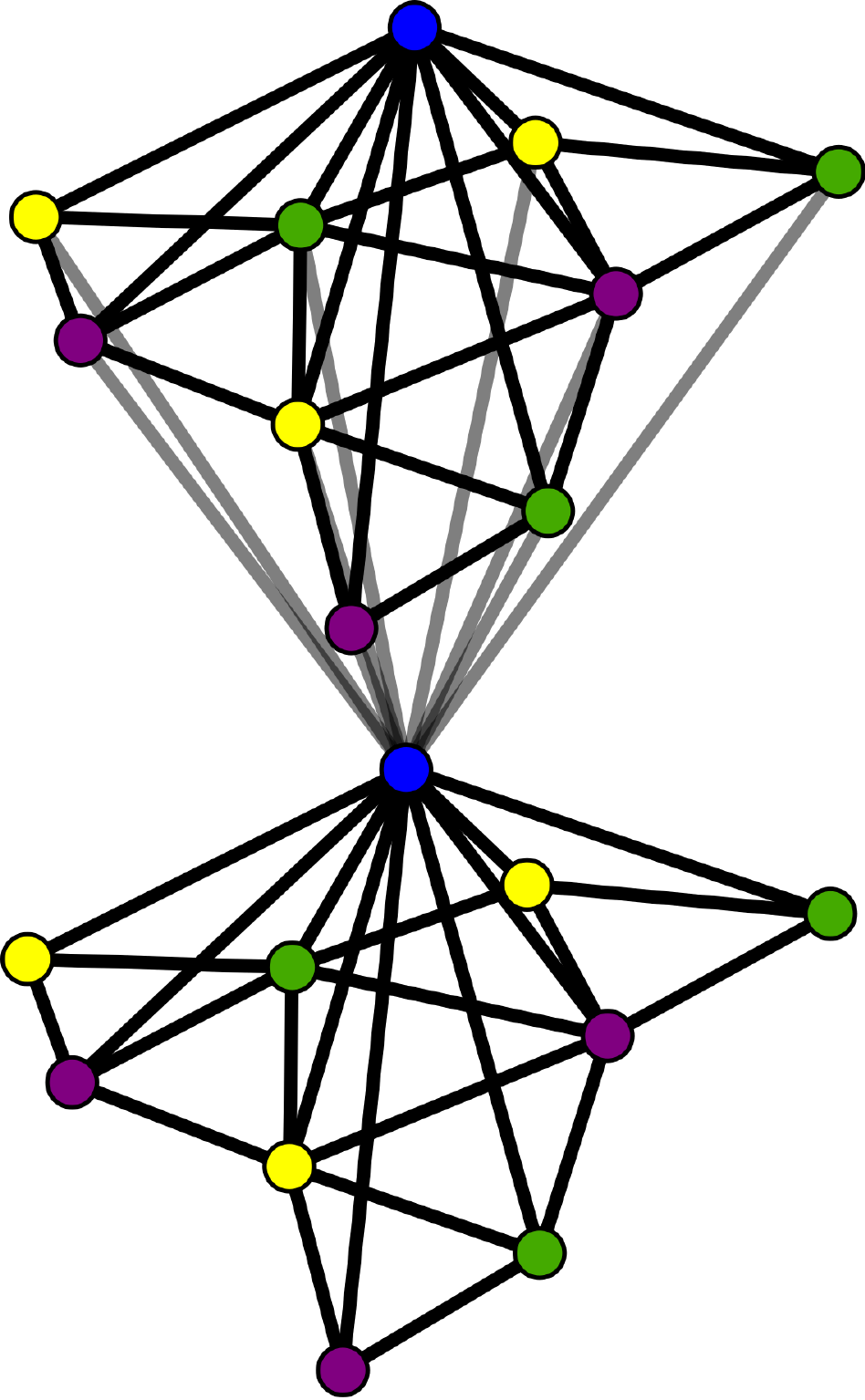}
\caption{}
\label{fig:Dual_3D}
\end{subfigure}
\caption{(\subref{fig:Primal_3D}) Graphical representation of the primal lattice of the $(d-1)+1$~stacked code formed by stacking different copies of 2D color codes, shown here for $d=5$. The copies of the 2D code are coupled either by measuring gauge operators or logical operator pairs (shown in blue) between the different layers. (\subref{fig:Dual_3D}) Dual lattice for the 3D stacked code ($d=5$). Vertices represent cell stabilizers in the primal lattice and edges represent faces shared by connected stabilizers.}
\label{fig:3Dstack}
\end{center}
\end{figure*}

\subsection{Properties of the stacked code}
\label{sec:3Dproperties}

The $(d-1)+1$ stacked code is also a 3D color code.  This can most easily be seen using its dual lattice, as follows. Take the dual lattice of the 2D color code, connect each of the vertices of the dual lattice (consisting of 3 colors) to a single vertex of a different color. We shall denote the colors of the original 2D code as green~$(g)$, purple~$(p)$, and yellow~$(y)$ and the color of the newly formed stabilizers in 3D by blue~$(b)$. Connect this single vertex to another set of vertices forming a 2D code, and repeat this process $(d-1)/2$~times. Each of the vertices in the dual lattice form a 3D stabilizer cell in the primal lattice, where edges between the vertices in the dual lattice are equivalent to faces at the intersection of cells in the primal lattice, see Fig.~\ref{fig:3Dstack} for an example of the dual lattice. It is  straightforward to see that this construction is equivalent to the construction outlined for the stacked code, and moreover, because the dual lattice is four-colorable and composed of tetrahedra, it is a valid 3D color code~\cite{BM07, KB15}.

We now proceed to determine the distance of the $(d-1)+1$ stacked code, making use of the well-studied properties of the 3D color code. The edges in the primal lattice of a color code can be identified with one of the colors of the code~\cite{Bombin15, KB15}. In the case of a 3D color code, the faces at the intersection of two tetrahedra in the dual lattice correspond to edges in the primal lattice, where the color of the edge in the primal lattice is given by the complementary color to the vertices forming the face in the dual lattice. A boundary of a given color is the set of points at which edges of that given color terminate without a stabilizer of the given color being present. In the case of the stacked code, the three original colors of the 2D lattice form boundaries along the three sides of the stack extending upwards from their original 1D boundary given by the 2D color code. The fourth boundary, for the newly introduced color in three dimensions (blue), is located along the bottom boundary of the 3D lattice, as none of these qubits touch a blue stabilizer.

The Bell stabilizers given in Eqs.~\eqref{eq:XLBell}--\eqref{eq:ZLBell} correspond to the blue stabilizers in Fig.~\ref{fig:3Dstack}, and are equivalent to measuring the joint logical~$X$ and~$Z$ operators of the two 2D color codes forming the top and bottom faces of the blue stabilizer. As opposed to traditional constructions of 3D color codes, the Blue stabilizers are not of low weight, but rather act on~$\mathcal{O}(d^2)$ qubits. This is a particular feature of the stacked code structure, as the Blue stabilizers measure joint logical operators across pairs of 2D sheets and thus must contain all qubits across those faces. However, as we show in Sec.~\ref{sec:FTPi8}, these high weight stabilizers across the full 2D sheets need not be measured in practice.

\begin{figure*}
\centering
\begin{subfigure}[t]{0.3\textwidth}
\includegraphics[width=\textwidth]{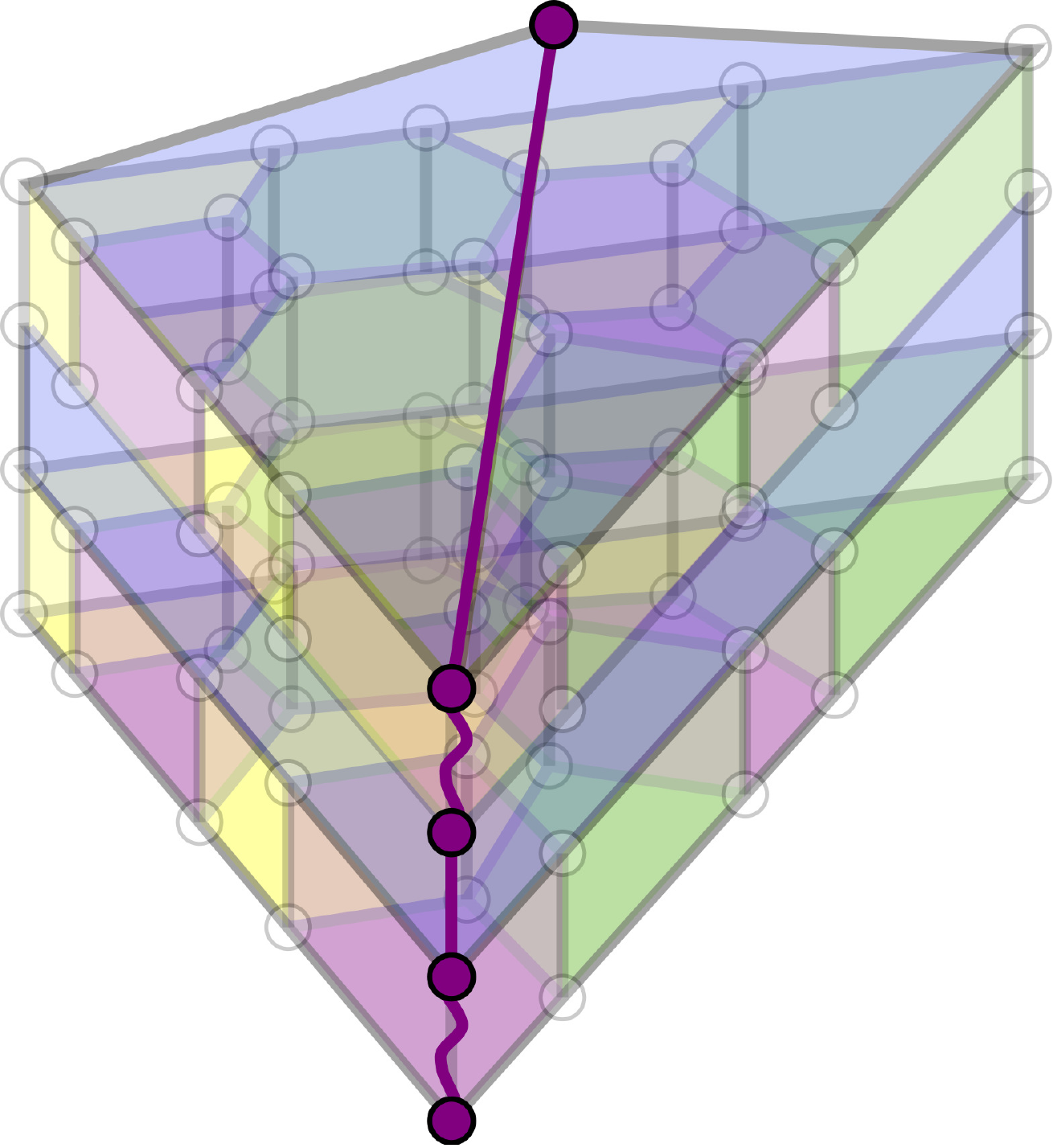}
\caption{}
\label{fig:3DerrorColor}
\end{subfigure}
\begin{subfigure}[t]{0.3\textwidth}
\includegraphics[width=\textwidth]{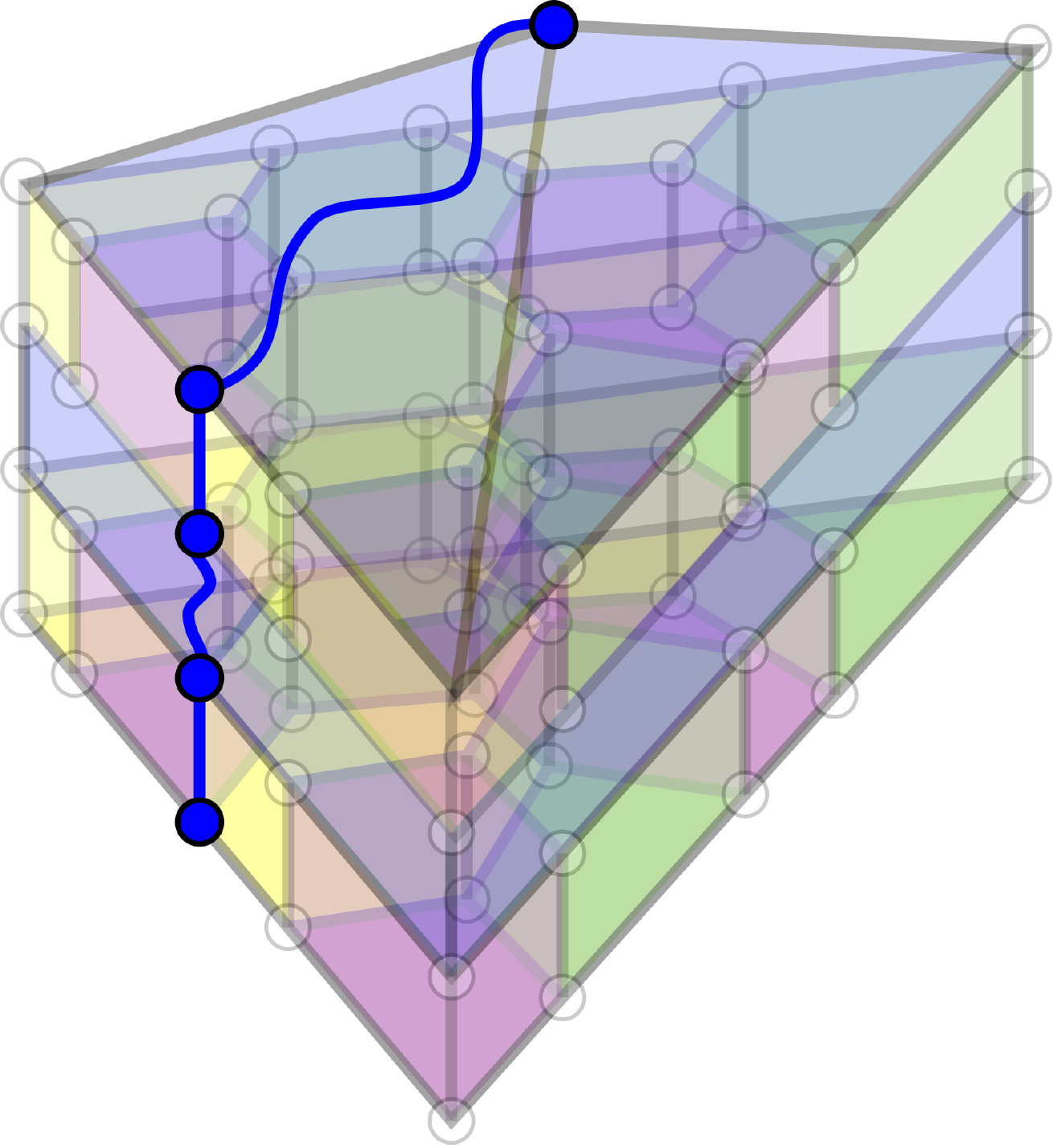}
\caption{}
\label{fig:3DerrorBlue}
\end{subfigure}
\begin{subfigure}[t]{0.3\textwidth}
\includegraphics[width=\textwidth]{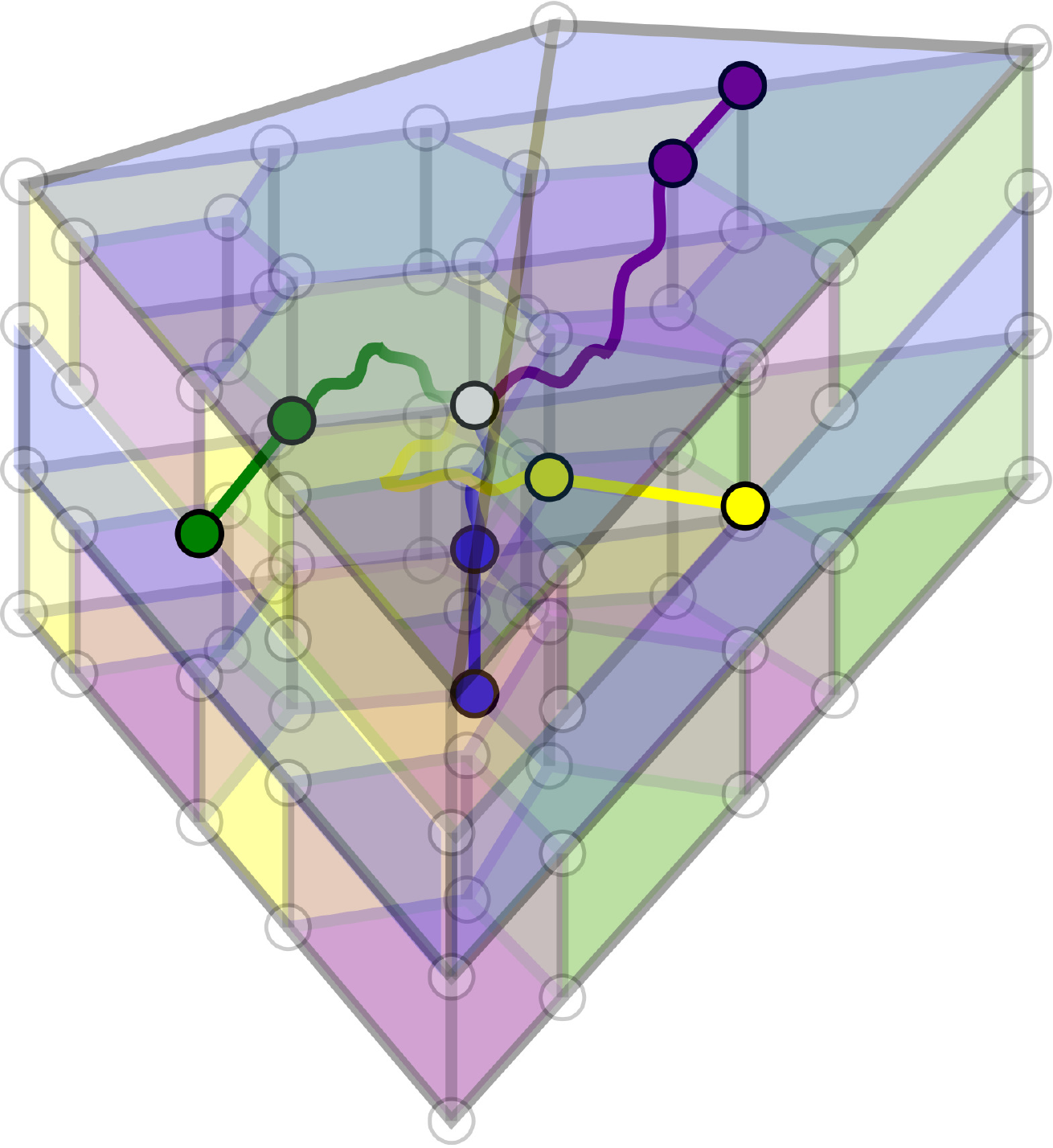}
\caption{}
\label{fig:3DerrorMulti}
\end{subfigure}
\caption{Examples of the different representations of equivalent logical error strings that exist in the 3D stacked code. The color of the logical strings are chosen according to the color of the edges they follow. The curved lines represent joining of edges through a stabilizer of the same color. In (\subref{fig:3DerrorColor}), because the string lies on the green--yellow boundary, it can be chosen to be either of the complementary colors, blue or purple. In (\subref{fig:3DerrorBlue}), the error string connects the bottom blue boundary to the joint boundary of the other three colors at the ancilla qubit, following blue edges. Example~(\subref{fig:3DerrorMulti}) shows how multiple colored boundaries can fuse in the bulk, thus negating the excitation that would otherwise be present.}
\label{fig:3Derrors}
\end{figure*}

Logical operators in any color code are given by string operators that connect the boundaries of different colors~\cite{BM07}.  A $c$-colored string operator is given by a set of qubits formed of connected edges of color~$c$ (two edges are connected if they share a stabilizer of color~$c$). A $c$-colored string operator either has endpoints at the boundary of color~$c$, in which case the final edge of this string connects the endpoint to the boundary, or in the bulk where the endpoint is located at a particular $c$-colored stabilizer, thus causing an excitation. If all of the colored strings meet at a given qubit, then the strings can ``fuse" and the bulk excitation formed by this endpoint will be negated~\cite{BM07, BM07b}. Therefore, in order to obtain a logical string operator, all colored string operators must connect their respective boundary to a shared fusion point, leading to a nontrivial string connecting boundaries of all colors without excitations. These properties now allow us to prove the distance of the stacked code.

\begin{lemma}
\label{lem:3DstackDist}
A $(d-1)+1$ stacked code is a 3D color code whose distance is~$d$.
\end{lemma}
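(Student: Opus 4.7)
The plan is to split the lemma into two claims—that the code is a 3D color code, and that its distance is $d$—and handle each separately, leaning heavily on the dual lattice picture already set up just before the statement.

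First I would dispatch the 3D color code claim. The dual lattice constructed in the paragraph preceding the lemma has vertices carrying one of four colors (the three 2D colors $g,p,y$, plus the new color $b$ attached to each stacked stabilizer), and its maximal simplices are tetrahedra built by connecting a blue vertex to one vertex of each of the three other colors in an adjacent 2D dual layer. Four-colorability together with the tetrahedral structure is exactly the defining criterion for a 3D color code \cite{BM07, KB15}, so this part follows once I verify that the construction yields a closed 3-complex (including at the top ancilla qubit, where the three 2D colors meet in a single simplex, and at the bottom blue boundary, where the blue color terminates).

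Next, for the distance I would prove matching upper and lower bounds. For the upper bound I would exhibit an explicit weight-$d$ logical operator. A natural choice is a vertical $Z$-string: take a qubit $\nu$ on the boundary of the 2D hexagonal code that lies simultaneously on the $g$, $p$, and $y$ boundaries (a corner vertex), and apply $Z_\nu$ on the corresponding vertex in each of the $d-1$ layers, together with $Z$ on the top ancilla qubit. The paired $Z_\nu^{(2k-1)}Z_\nu^{(2k)}$ kills the gauge-derived edge stabilizers $H_{e_i}^{(2k-1)}H_{e_i}^{(2k)}$ supported on the qubit, the Bell stabilizers $Z_L^{(2k)}Z_L^{(2k+1)}$ are respected because the single-qubit $Z_\nu$ is a valid logical-$Z$ representative at a color-code corner, and commutation with all $X$-type stabilizers is immediate from the even-overlap structure of color code plaquettes at corner vertices. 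This string connects the bottom blue boundary to the top fusion point where $g,p,y$ meet, which is precisely the configuration in Fig.~\ref{fig:3DerrorBlue}, so it is a nontrivial logical operator of weight $d$.

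For the lower bound I would use the boundary-connection characterization of color-code logical operators: any nontrivial logical string must, up to stabilizers, consist of $c$-colored substrings that collectively reach all four colored boundaries $\{g,p,y,b\}$ without leaving a bulk excitation (fusion at shared endpoints only). Consider any such configuration $L$ of minimum weight. Either (i) $L$ contains a blue substring connecting the bottom blue boundary to the fusion point at the top, in which case its length is at least the vertical depth $d$ of the stack; or (ii) $L$ contains substrings of some color $c\in\{g,p,y\}$ that traverse between the two opposite $c$-colored side boundaries. In the latter case I would project $L$ onto a single 2D layer by using the paired edge stabilizers to collapse vertically-matched pairs of qubits into one, which does not increase weight; the projected operator is a nontrivial 2D color-code logical string and so has weight at least $d$.

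The main obstacle will be the lower bound, specifically making the ``project to a single layer'' step airtight—one must verify that the collapse via $H_{e_i}^{(2k-1)}H_{e_i}^{(2k)}$ and the Bell stabilizers $X_L^{(2k)}X_L^{(2k+1)}$, $Z_L^{(2k)}Z_L^{(2k+1)}$ really does produce a well-defined nontrivial 2D logical representative, rather than a trivial operator or something whose weight can be reduced below $d$ by a clever combination of vertical and horizontal segments. I expect this to reduce, after some bookkeeping, to the observation that the horizontal components in each layer individually satisfy the 2D color-code logical-string conditions, and that the total weight is bounded below by the sum (over layers) of the 2D weights, which is at least $d$ in the worst case. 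Combining this with the explicit weight-$d$ string from the upper bound yields distance exactly $d$.
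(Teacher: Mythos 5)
Your proposal follows essentially the same route as the paper's proof: the 3D color code claim is dispatched via the four-colorable tetrahedral dual lattice, and the distance lower bound proceeds by the same dichotomy --- either the logical string must run vertically from the bottom $b$~boundary up to the ancilla fusion point (depth $d$), or it traverses a pair of layers horizontally, in which case the paired edge operators $H_{e_i}^{(2k-1)}H_{e_i}^{(2k)}$ cancel vertically matched qubits and reduce the problem to the distance-$d$ 2D code. Your explicit weight-$d$ vertical corner string for the upper bound is a welcome addition that the paper leaves implicit, and the ``projection to a single layer'' step you flag as the main obstacle is exactly the step the paper handles by observing that duplicated edges across a pair are gauge operators. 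The one omission: your lower bound quantifies over ``any nontrivial logical string,'' but logical $X$~operators in a 3D color code are membranes, not strings; you need the paper's one-line remark that the $X$~distance is bounded below by the $Z$~string distance (or an equivalent CSS duality argument) for the case analysis to cover all nontrivial logical operators.
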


\begin{proof}
The stacked code comprises pairs of 2D layers separated by large blue~$(b)$ stabilizers. We shall consider two different representations of logical $Z$~operators, one where the logical operator is composed of qubits that are only in a single pair of these 2D layers, and one where the logical operators span multiple pairs of 2D layers. In the first case, the only way for such a logical operator to connect to the $b$~boundary would be for it to be in the bottom-most pair of 2D layers, as they themselves are trivially connected to the $b$~boundary. However, because we are focusing on a single pair of 2D~layers, we can map the problem of finding a logical operator to that of finding one in a single 2D layer, where connecting edges correspond to one of the pair of edges connecting two stabilizer cells of the same color (these edges correspond to the original edges of the individual 2D codes). If the same edge in both color code copies is part of the error chain, then these two edges cancel out as the resulting face corresponds to a gauge~$Z$ operator. Therefore, we can refer back to individual edges connecting stabilizers in the 2D color code. As such, because the 2D code is a code of distance~$d$, the smallest-weight logical string that connects the different colored boundaries must be weight~$d$, and therefore any such logical operator will be of distance~$d$.

Suppose we are given a set of $Z$~errors forming a string operator of one of the colors of the original 2D code.  Without loss of generality, let this string be of color~$g$. Now, given that a string operator formed by a set of edges of color~$g$, the only way for a string operator of color~$g$ to connect qubits from different pairings of 2D layers (that is, traverse a blue stabilizer) is by using $g$-colored edges at the corner of a given layer. These points lie at the joint boundary of $p$~and~$y$ by definition. There are then two methods for such an error string to connect to the $g$~boundary, either by traversing through a given 2D pair to the $g$~boundary of the other side, or by connecting up to the single ancilla qubit that is at the intersection of the~$g$, $p$, and $y$ boundaries. In the case of the former, in order for a logical string to connect across a given pair of 2D layers to the boundary on the other side, the minimal weight will be governed by the distance of the individual 2D codes, as we previously saw. Therefore, the minimal weight of such a logical operator will be~$d$. In the case when the error string connects to the single ancilla qubit, then in order to form a logical operator it must also connect to the $b$~boundary, as shown in Fig.~\ref{fig:3Derrors}\subref{fig:3DerrorColor}. The single ancilla qubit is as far away from the $b$~boundary as it can be, and in order to create a logical string that connects to the bottom boundary through $g$~edges, there will have to be at least a single qubit per 2D layer connecting to the ancilla qubit. Therefore, the minimal distance of such an operator will also be~$d$. Finally, we must consider the case where the logical string is not composed of strings of colors~$\{ g, p, y \}$ (the original colors of the 2D code). In such a case, the string operator must terminate at the joint boundary of the three colors, again given by the single ancilla qubit, and as in the previous case must connect the single qubit to the bottom $b$~boundary through a $b$-colored chain, as shown in Fig.~\ref{fig:3Derrors}\subref{fig:3DerrorBlue}. Such an operator will be of weight at least~$d$ as argued above. As such, the minimum weight non-trivial $Z$~operator is of weight~$d$. Since logical~$X$ operators are formed by connecting 2D membranes in the 3D code~\cite{BM07}, the $X$~logical distance will be greater than that of the $Z$~logical operators, and as such the distance of the code is~$d$.
\end{proof}

We note a potential efficiency that may be gained in the number of qubits in the stacked code.  Because the distance to the blue boundary (the bottom layer) of each pair of 2D code sheets increases by 2 for each separation by a blue stabilizer, as shown in Fig.~\ref{fig:3Dstack}, we can in principle use pairs of 2D color codes of decreasing distance according to how far away they are from the blue boundary, i.e., decreasing with $k$. Although we do not prove this result here, the intuition behind this idea is as follows.  Because a logical error must connect to the blue boundary, there is extra protection for any logical error that wants to span a given pair of 2D sheets as the error string will have to traverse all layers below the pair of layers. The stacked code prepared in such a way would resemble more of a pyramid than a prism.  This method of stacked code construction leads to an analogous code as presented by Bravyi and Cross, based on differing sizes of doubled color codes~\cite{BC15}.

\subsection{Fault-tolerant implementations of a universal gate set}
\label{sec:FTPi8}

Consider a qubit encoded into a 2D hexagonal color code.  By the properties of this code, logical Hadamard~$H$ and Phase~$S$ are transversal, and a logical CNOT between two such codes is also transversal. These are all logical Clifford gates, and so we require an additional gate such as the logical $\pi/8$~gate to complete a universal set.  As we now show, transforming to the 3D stacked code can be used as a means to complete a universal gate set, just as gauge fixing provides a means for dimensional jumps in gauge color codes~\cite{Bombin14, Bombin15}.

The initial ancillary 2D layers can be prepared in their appropriate Bell pairs offline.  Because these states are stabilizer states, they can be prepared fault-tolerantly.  In order to preserve the fault-tolerance property of the high-weight Bell stabilizers measurements, a cat state of the same number of qubits as the weight can be prepared fault-tolerantly offline~\cite{Shor96, DS96}. The measurement of these high-weight stabilizers is repeated in order to ensure fault-tolerance~\cite{AGP06}. Note that this preparation process can be combined with the final measurement process outlined below, and therefore will not contribute to the overall runtime to complete the operation.

With the ancilla layers prepared in the appropriate state, the transformation from the $k=1$ 2D color code to the stacked code can be induced by measuring the gauge operators in a fault-tolerant manner similar to that of surface code, such that any errors do not spread between data qubits. At this point, the logical qubit is stored throughout the different stacks in the $(d-1)+1$ stacked code.  We emphasize that the high-weight stabilizers of the stacked code are \emph{not} measured at this stage.  Rather, the logical transversal $\pi/8$~gate is performed, and we then immediately transform back to the 2D code (without any active error correction being performed on the stacked code).  The transformation back to the 2D color code is induced by measuring the \textit{original} stabilizers of the 2D code, and the ancillary 2D stacks and their Bell stabilizers. Because the measurements can be performed fault-tolerantly without spreading errors, the code is protected by a distance~$d$ code at all times, and any error that occurred throughout the process can be inferred from the final measurements, as explained below. 

Having returned to the original 2D code, the computation can continue with the application of transversal Clifford gates before potentially doing the same process for another~$\pi/8$~gate at a different point in the computation. It is worth noting that the ancilla state is required to be measured fault-tolerantly through repeated measurements in order to correctly infer the errors on the final 2D color code after completion of the gate. Therefore, this ancilla remains ``ready'' at this stage for future non-Clifford computation and does not have to be re-prepared.

What remains to be shown is how an error that occurs while the information is encoded in the stacked code can be inferred from the final 2D code plus ancilla measurements. Suppose an error of weight less than~$d$ occurred while the state is encoded in the stacked code.  Because the logical~$\pi/8$ gate is transversal, errors may transform but will not increase in weight as a result of the logical gate. Therefore, such an error will remain of weight less than~$d$. As such, if one were to measure the stabilizers of the stacked code, one would see a change in the sign of one of the cell stabilizers. Suppose the error anticommutes with cell~$G_{P_i}^{(2k-1)} G_{P_i}^{(2k)}$ (this corresponds to an $Z$~error, a similar argument follows for $X$~errors). The presence of the error can be inferred from the measurement of the original stabilizers of the 2D planes, because the product of the individual outcomes of measurements~$G_{P_i}^{(2k-1)}$ and $G_{P_i}^{(2k)}$ will be equivalent in sign to the measurement of the cell of the stacked code. It should be noted that the sign of the individual measurements will not necessarily be preserved, because the individual stabilizers of the 2D sheets anticommute with the gauge operators.  However, the effect of these sign changes will simply be to set the stabilizer reference frame for subsequent measurements. Finally, if the error anticommutes with a blue stabilizer of the form~$X_L^{(2k)} X_L^{(2k+1)}$, one can still infer the error from the measurement of the individual operators on the sheets and the joint logical measurements along the shared boundary of the sheets. We return to this last point in Sec.~\ref{sec:quasi2D}.

\section{Unfolding the stacked code: A 2D implementation}
\label{sec:quasi2D}

Our stacked code provides a mechanism for performing a fault-tolerant logical $\pi/8$~gate on a qubit encoded in a 2D color code by switching to a third dimension.  It requires the measurement of high-weight Bell stabilizers that couple pairs of 2D color codes---a requirement that is \emph{not} necessary if one used the related approach of dimensional jumps in gauge color codes~\cite{Bombin14, Bombin15}, wherein the 3D color codes have low-weight, geometrically local stabilizers in three dimensions.

In this section, we show that our stacked code has a key advantage over more standard 3D color codes possessing geometrically local stabilizers, in that it can be arranged in a \emph{two-dimensional} geometry.  For the transformation to and from the stacked code in 2D, we require only geometrically local (in 2D) gauge measurements in the bulk, together with Bell stabilizers measurements along one-dimensional boundaries in this 2D layout.

\begin{figure}
\centering
\begin{subfigure}[t]{0.45\textwidth}
\includegraphics[width=0.8\textwidth]{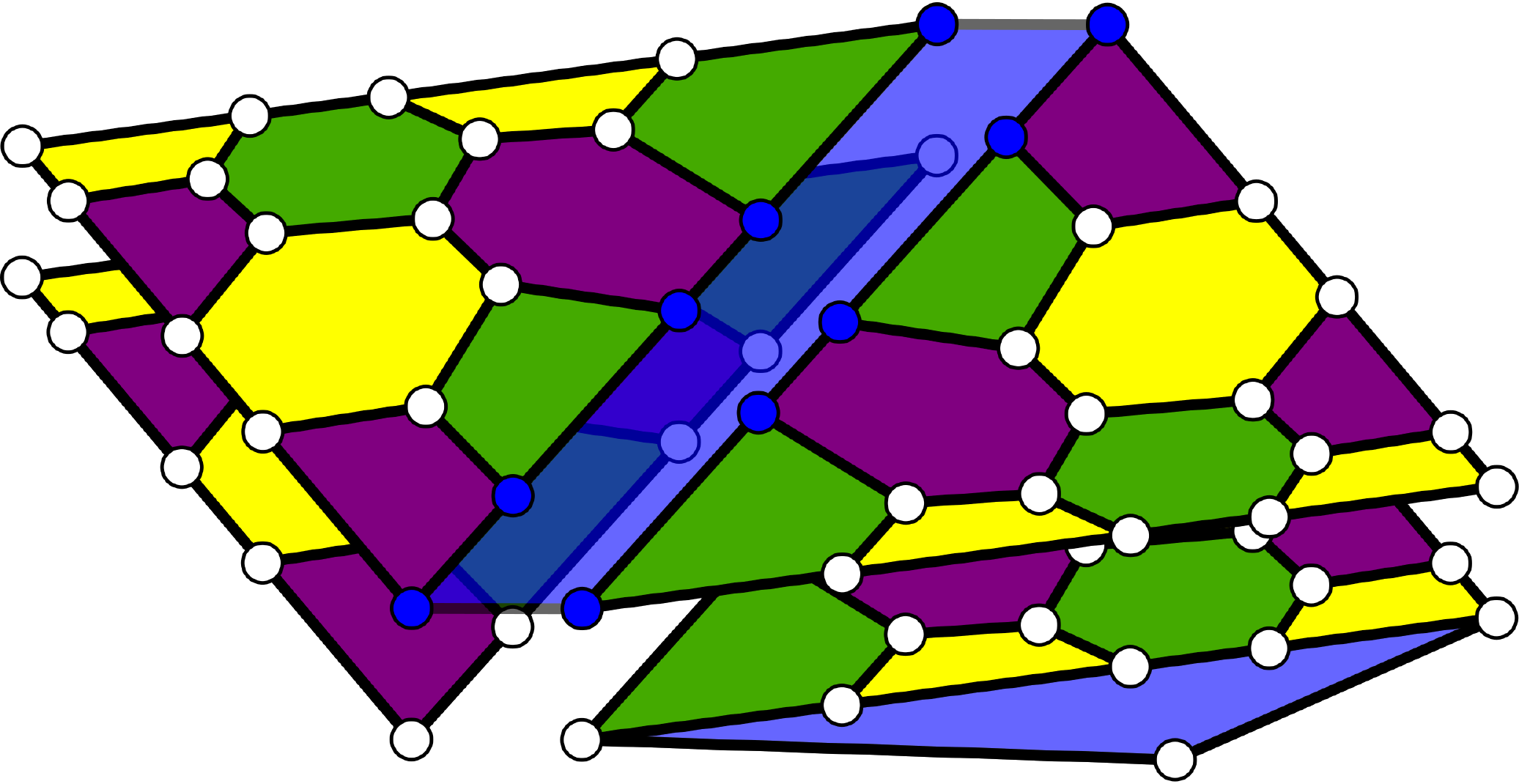}
\caption{}
\label{fig:q2Dinit}
\end{subfigure}
\begin{subfigure}[t]{0.45\textwidth}
\includegraphics[width=0.8\textwidth]{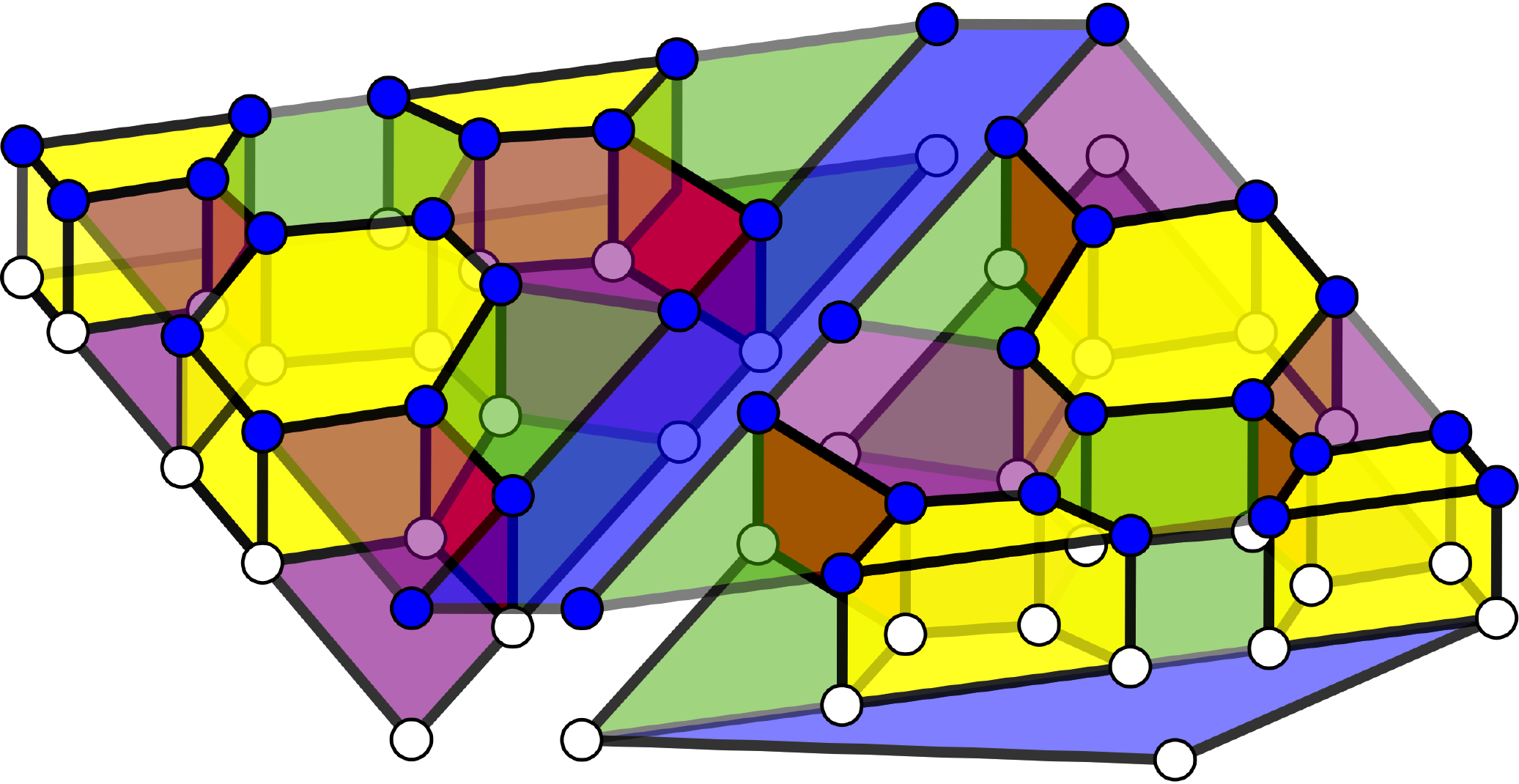}
\caption{}
\label{fig:q2Devolve}
\end{subfigure}
\caption{A 2D layout for the implementation of the stacked code ($d=5$ shown). Pairs of copies of the 2D hexagonal color code are layered on top of one another in a single 2D layer, in such a way as to keep the gauge operators geometrically local. (\subref{fig:q2Dinit}) Initial layout of the stacked code transformation in 2D. The 2D layers $(2k)$ and $(2k+1)$ are coupled by measuring joint logical~$X$ and~$Z$ operators (Bell stabilizers), with supporting qubits shown in blue.  Although Bell stabilizers for the stacked code are high-weight, involving all blue qubits, the only required measurements are those associated with local 2D stabilizer/gauge operators together with one-dimensional operators of weight $\mathcal{O}(d)$~(shaded blue). The only 2D plane that is not initially coupled to another layer (or ancilla qubit) is the bottom $k=1$ layer, which stores the encoded qubit. (\subref{fig:q2Devolve}) Measurement of the weight-4 $Z$-type gauge operators, shown in Red. $X$-type stabilizers from individual layers are combined to form cell-like stabilizers by stabilizer evolution. Original joint logical $X$~measurements, given by Blue shaded region, are mapped to all Blue qubits.}
\label{fig:q2Dlayout}
\end{figure}

\subsection{Arranging the stacked code in two dimensions}
\label{sec:2Dlayout}

Consider the 2D layout of different copies of the 2D~hexagonal color code presented in Fig.~\ref{fig:q2Dlayout}, where layers $(2k-1)$ and $(2k)$ are combined into a single 2D plane and neighboring pairs of layers are arranged next to each other within this 2D plane, equivalent to the doubled color codes of Ref.~\cite{BC15}.  The geometric arrangement can be viewed as unfolding the pairs of copies of the 2D code separated by the Bell stabilizers and tiling the pairs in a 2D plane.  We shall refer to this arrangement as the \emph{unfolded stacked code}.  While it is visually useful to place layers $(2k-1)$ and $(2k)$ separated vertically as in Fig.~\ref{fig:q2Dlayout}, the qubits in these layers can be arranged in a single 2D plane; see Fig.~\ref{fig:local2Dlayout}.

The key feature of this geometric arrangement, which we show in the next section, is that the Bell stabilizers between layers $(2k)$ and $(2k+1)$ can be measured along the shared boundary. Although not geometrically local, this is a very desirable type of measurement from the perspective of physical implementations as the measurement is along a single 1D strip defined by the boundary of the two layers, and may be performed by coupling to a common mode or bus.  One way to ensure fault-tolerance for such a measurement would be to prepare an ancillary state for readout, such as a cat state~\cite{Shor96, DS96}, and repeat the measurement $\mathcal{O}(d)$~times~\cite{AGP06}. The qubits composing the cat state could be arranged along the boundary, and because they will have to be measured to infer the logical measurement, they will be reset and available for the next round of measurement. We note that the scheme is not limited to performing this measurement using a cat state.  Any fault-tolerant readout scheme for these high-weight operators may be applied here, assuming it can conform to the architectural constraints.  We leave this for future work. A nonlocal operation, such as the one described here, is a necessary feature in order to circumvent the Bravyi--K\"onig no-go theorem for constant-depth logical gates outside the Clifford group in topological stabilizer codes in two dimensions~\cite{BK13}.  The resulting code is equivalent to the stacked code, as the joint logical measurement operators along the boundary are mapped to 2D sheets due to the modification of the stabilizers by the gauge measurements.

\subsection{Transformation of the joint boundary Bell stabilizers}

In order to understand the transformation of the Bell stabilizer operators along the boundary, we consider the transformation of stabilizer operators under measurement of anti-commuting Pauli operators.  The $Z$-type Bell stabilizer measurement is straightforward, because the gauge measurements are of type~$Z$ and thus a $Z$-type Bell stabilizer along the boundary remains of that form. This statement is equivalent to the fact that the volume operator of weight~$\mathcal{O}(d^2)$ can be mapped to a boundary plaquette operator due to the gauge~$Z$ measurements. 

Next, we consider the transformation of the joint~$X$ logical boundary operators. Consider an instance of two pairs of 2D codes that are connected by joint logical string operators~$X_{L,s}^{(2k)}X_{L,s}^{(2k+1)}$ and~$Z_{L,s}^{(2k)}Z_{L,s}^{(2k+1)}$, initially shown in Fig.~\ref{fig:q2Dlayout}\subref{fig:q2Dinit}. Let~$\{ H_{c_i}^{(2k-1)}  H_{c_i}^{(2k)} \}$ denote the set of gauge operators that touch the joint logical boundary for 2D layers~$(2k-1)$ and~$(2k)$ of color~$c$, indexed by the label~$c_i$.  Because these~$Z$ operators only intersect with~$X_{L,s}^{(2k)}X_{L,s}^{(2k+1)}$ at a single qubit, these operators anti-commute. Additionally, $\{ H_{c_i}^{(2k-1)} H_{c_i}^{(2k)} \}$ anti-commutes with the individual~$G_{P_{c_i}}$ plaquette operators of matching color from the individual 2D codes $(2k-1)$~and~$(2k)$. The stabilizers of the code are thus modified as follows:  $\{ H_{c_i}^{(2k-1)}  H_{c_i}^{(2k)} \}$ becomes a new stabilizer of the code, replacing~$G_{P_{c_i}}^{(2k)}$. Then, $G_{P_{c_i}}^{(2k-1)}$ is modified by being multiplied by the replaced stabilizer, thus becoming the cell operator~$G_{P_{c_i}}^{(2k-1)}G_{P_{c_i}}^{(2k)}$. Finally, the joint logical operator is also modified by being multiplied by all replaced plaquettes of color~$c$, that is, it becomes
\begin{equation}
  \Bigl( \prod_{c_i} G_{P_{c_i}}^{(2k)} \Bigr) X_{L,s}^{(2k)}X_{L,s}^{(2k+1)}\,.  
\end{equation}
Because similar joint gauge $Z$~measurements are performed between layers~$(2k+1)$ and~$(2k+2)$, the original joint boundary operator is mapped to the operator:
\begin{equation}
  \Bigl( \prod_{c_i} G_{P_{c_i}}^{(2k)}\Bigr) \Bigl( \prod_{c'_j} G_{P_{c'_j}}^{(2k+1)} \Bigr) X_{L,s}^{(2k)}X_{L,s}^{(2k+1)}\,,
\end{equation} 
which corresponds to all qubits on layers~$(2k)$ and $(2k+1)$. An example of the modified joint logical operator is shown in Fig.~\ref{fig:q2Dlayout}\subref{fig:q2Devolve}. The joint logical~$X$ operator is spread over the full 2D lattice, as governed by the transformation of stabilizer operators, and thus becomes one of the blue cells shown in Fig.~\ref{fig:3Dstack}.

\begin{figure}[htbp]
\begin{center}
\includegraphics[width = 0.4\textwidth]{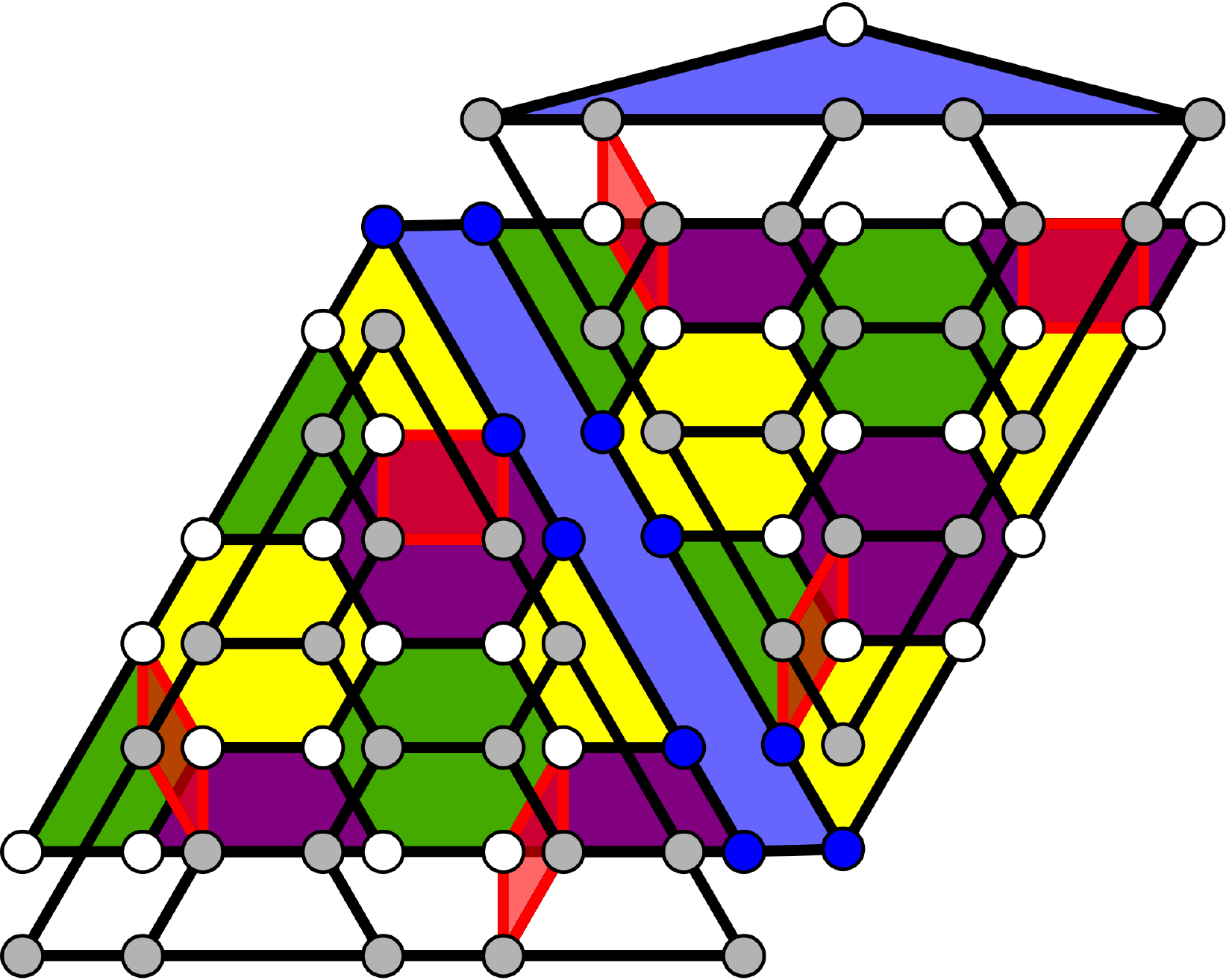}
\caption{A two-dimensional layout of the construction presented in Fig.~\ref{fig:q2Dlayout}. The two originally superimposed lattices have respective grey and white lattice qubits. Only one of the color code stabilizers (per pair) have been colored, for clarity. Gauge measurement operators are given by red faces. Here, we have identified three individual gauge measurements per pair of codes for clarity, there are actually~$3(d^2-1)/8$ such gauge measurements for each pair of distance~$d$ codes.}
\label{fig:local2Dlayout}
\end{center}
\end{figure}

\subsection{Implementation of a fault-tolerant~$\pi/8$~gate in two dimensions}

We now describe how to perform a fault-tolerant $\pi/8$~gate using this stacked code arranged in two dimensions.  We initialize with the information encoded into a 2D color code and pairs of 2D codes laid out edge-to-edge in a 2D arrangement. Bell stabilizers are measured along 1D boundaries between two single sheets from different pairs, before finally measuring out the gauge operators in a local manner between pairs of 2D sheets. Having completed this process, the original information of the 2D code is now stored in a stacked code, and the non-Clifford $\pi/8$~gate can be executed transversally. After completion of the gate, the process is reversed by measuring the original stabilizers of the 2D code and ancilla qubits. The information is mapped back into the 2D color code, where transversal Clifford gates are available for further logical computation. 

We emphasize that the expanded joint logical operators are never  measured in practice, as the transformation from the 2D color code to the stacked code only serves for the application of the logical $\pi/8$  gate. Because the code has distance~$d$ throughout the process without coupling qubits during the measurements, the procedure remains fault-tolerant. If an error of weight less than~$d$ were to occur while the state was encoded in the stacked code, such an error will anticommute with one of the stabilizer cells of the stacked code. We covered the case when it anticommutes with one of the cells of the original 2D code color in Sec.~\ref{sec:FTPi8}. Thus, consider the case where the error anti-commutes with $X_{L,2D}^{(2k-1)}X_{L,2D}^{(2k)}$, where this joint logical operator is across the full 2D surface of the sheets. However, note the following:
\begin{align*}
X_{L,2D}^{(2k-1)}X_{L,2D}^{(2k)} =  \prod_i G_{P_{c_i}}^{(2k-1)} \prod_j G_{P_{c_j}}^{(2k)}  \left( X_{L,b}^{(2k-1)} X_{L,b}^{(2k)}\right),
\end{align*}
where $X_{L,b}^{(2k-1)} X_{L,b}^{(2k)}$ is the joint boundary operator of color~$c$ that is shared by both 2D sheets, and $G_{C_i}^{(l)}$ are the individual $X$~stabilizers of color~$c$ of the two sheets. Therefore, the product of the outcome of all these individual measurements will have to be preserved, that is by taking their product one can infer the measurement outcome of the joint logical operator across the full 2D sheets, as given by the blue qubits in Fig.~\ref{fig:q2Devolve}. As such, this large weight operator does not actually have to be measured to ensure fault-tolerance and rather it is sufficient to measure the individual 2D operators and joint-logical operators along their boundary after the completion of the transversal $\pi/8$~gate. 

This construction results in a fault-tolerant application of a logical~$\pi/8$~gate, yet the growing size of the joint boundary operators leave open the question of whether a rigorous fault-tolerance threshold exists.  We note that, although the subdivision gadget of Ref.~\cite{BC15} establishes a method to reduce the overall weight of the individual operators that have to be measured, it bears similarities to weight reduction techniques proposed in subsystem codes~\cite{Bacon06} which exhibit a decreasing pseudothreshold for each distance rather than a threshold.

\subsection{Comparison to Bravyi--Cross result}

We briefly compare our construction with that of the very recent parallel result by Bravyi and Cross~\cite{BC15}.  In that paper, the authors present a construction of a code for the application of a transversal $\pi/8$~gate through the construction of a triply even code from multiple copies of doubly even codes. They use a construction that mirrors the construction presented here, where 2D color code lattices are chosen with two qubits per site, denoted ``doubled color codes." Each 2D lattice interacts with another 2D lattice through a joint logical measurement at their boundary (the Bell stabilizers presented in our work). A key insight in Ref.~\cite{BC15} is the proposal of a method to measure the Bell stabilizers using only local gauge measurements by applying a ``subdivision gadget." Jones, Brooks, and Harrington recently proposed a similar method for breaking down the measurement of the Bell stabilizers in the construction of triply even codes based on the 2D $[4.8.8]$~color code~\cite{JBH15}. Their construction is inspired by lattice surgery methods for the implementation of joint logical measurements between two copies of 2D color codes~\cite{HFDvM12,LR14}.

Another key contribution of Ref.~\cite{BC15} is the development of an online decoder to handle the transformation of Pauli errors to non-Pauli errors due to the action of the non-Clifford $\pi/8$~gates. Because this gate transforms $X$~errors into a form of correlated~$X$ and $Z$ errors, this can cause difficulties in the decoding of such errors. The authors introduce a Pauli twirling map after the application of the transversal~$\pi/8$ in order to map the original $X$~error to a probabilistic application of $Z$~errors in combination with the original~$X$ error. This twirling map allows for the construction of a maximum likelihood decoder for error correction.  Techniques developed for the purpose of this decoder could potentially be applied to our construction as well.

\section{Mapping from a larger distance 2D color code to the stacked code}
\label{sec:Distances}

\begin{figure}[htbp]
\begin{center}
\includegraphics[width = 0.4\textwidth]{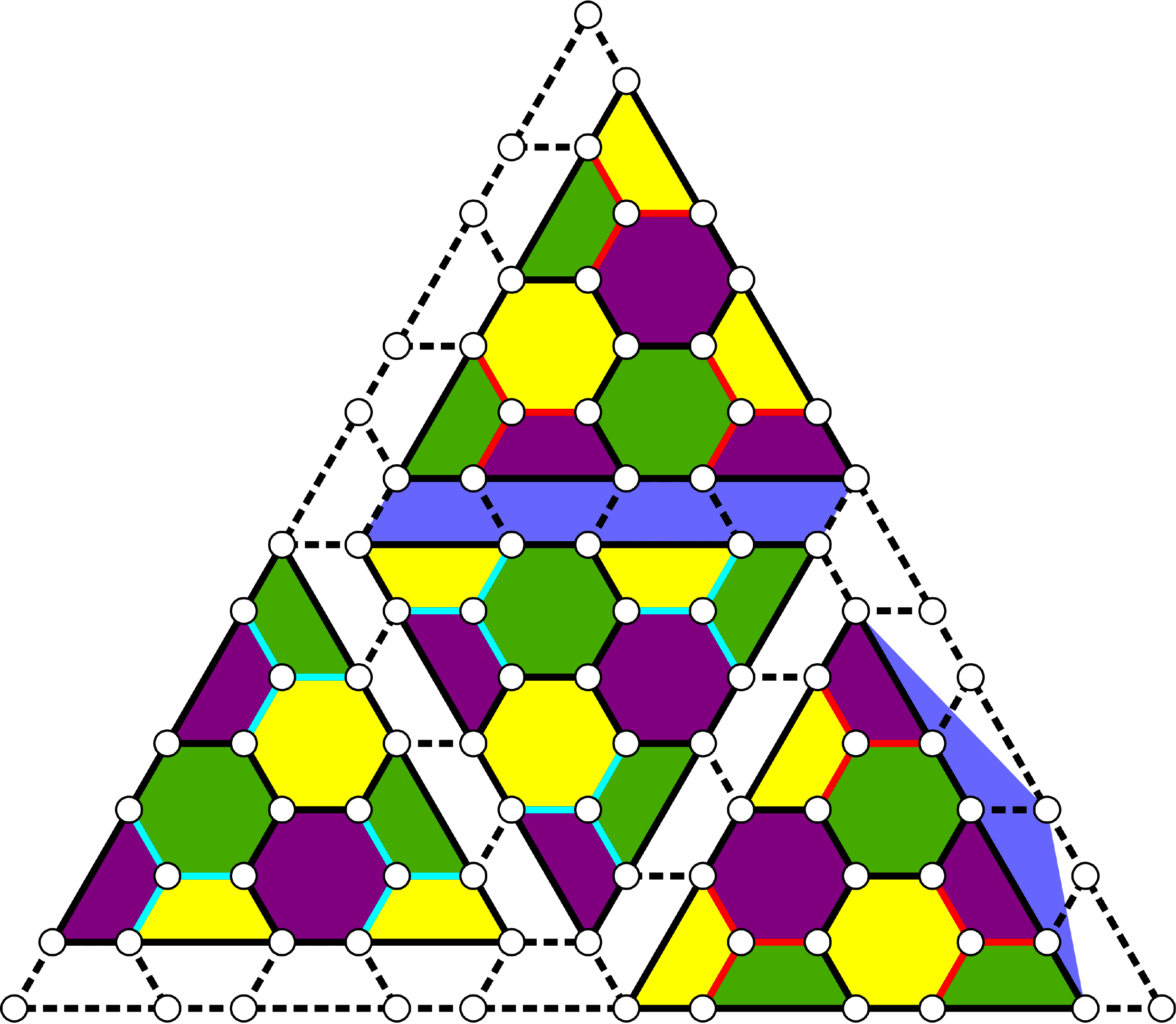}
\caption{Initial coupling of split regions of a 2D color code. The original code is split into multiple color code copies by turning off and modifying certain stabilizer measurements. Different patches are coupled to form a Bell pair by measuring joint logical~$X$ and~$Z$ stabilizers between them, shown in blue. The patch that is not coupled in this way retains the quantum information that was originally stored in the code. The different patches are then further joined together by measuring gauge operators by matching up weight-2 edges from the different patches (forming weight-4 gauge operators), shown by red and cyan edges.}
\label{fig:2Dlayout}
\end{center}
\end{figure}

In this section, we describe a procedure to construct the stacked code as a reduction of a single higher distance 2D color code. This analysis is provided not as a direct means to implement the stacked code in 2D, as we believe the scheme outlined in Sec.~\ref{sec:2Dlayout} is a more practical approach. Rather, we introduce this scheme in order to analyze the scaling of the distance of the stacked code architecture when constructed from a larger 2D color code. The motivation of this analysis is to characterize the degree of nonlocality that is required for stacked codes as a function of the larger 2D distance~$d_2$ in order to implement a non-Clifford transversal gate.

To convert between the 2D architecture and the stacked code architecture, consider initializing a qubit encoded in a higher distance 2D color code, with distance $d_2 \ge d\sqrt{d-1}+1$, where~$d$ is the target distance of the stacked architecture. The initial 2D code is then converted to multiple copies of smaller color codes by ``turning off" certain stabilizers and changing the weighting of others, while simultaneously measuring joint logical~$X$ and~$Z$ operators between neighboring pairs of these newly formed smaller regions, as shown in Fig.~\ref{fig:2Dlayout}.  The logical qubit that was encoded in string operators spanning the full distance of the 2D code is mapped by this process to only a single patch in the 2D layout---the patch that is not paired with another.  This process corresponds to initializing the different layers of the 3D stacked code before the measurement of the gauge operators. The errors that occur can be tracked by recording the statistics of the measurement of the stabilizers before and after their modification, mirroring the technique for various logical gates in the 2D surface code~\cite{FMMC12}. The disadvantage of this architecture is that the gauge operators have to be measured by pairing qubits at different spatial locations in the 2D code, by matching individual edges in each code to form weight-4 operators. A particular set of edges that could be used for gauge measurements in the case of $d=5$ is identified in Fig.~\ref{fig:2Dlayout}. However, the pairings remain relatively local with respect to the 2D code distance as their separation is~$\mathcal{O}(d) = \mathcal{O}(d_2^{2/3})$, which is the same order of nonlocality as the required joint logical measurements. Therefore, by modifying stabilizer measurements and performing joint measurements whose spatial nonlocality is of order~$\mathcal{O}(d_2^{2/3})$ one can logically map a 2D color code to a 3D color code, thus providing the framework to perform a transversal $\pi/8$~gate and enabling a universal set of fault-tolerant gates. 

The distance penalty one pays for such a process is a reduction from $d_2$ to $d_2^{2/3}$, however note that for two color codes with the same number of physical qubits~$n$, the distance of the 2D color code has scaling~$d_2 = \mathcal{O}(\sqrt{n})$ while the 3D color code has scaling~$d = \mathcal{O}(n^{1/3}) = \mathcal{O}(d_2^{2/3})$. Such a distance penalty is to be expected, as the no-go result of Bravyi and K\"onig states that any circuit of depth~$h$ whose individual gates have geometric nonlocal range~$r$ that satisfies~$hr \ll d^{1/2}$, for a 2D topological stabilizer code, can only implement a gate from the Clifford group. Therefore, it should be expected that if one can map to code that can implement a transversal $\pi/8$~gate the degree of geometric nonlocality must be at least of order~$\mathcal{O}(d_2^{1/2})$, which our scheme clearly satisfies (yet does not saturate). Whether there exists methods to implement a fault-tolerant non-Clifford gate in 2D using a reduced degree of nonlocality is an interesting open problem. 

\section{Conclusion}
\label{sec:Conclusion}

Here, we have introduced stacked codes: a class of 3D color codes composed of individual 2D color code layers. We present a method to implement a universal set of logical gates transversally based on a 2D topological stabilizer code. We show that by layering pairs of 2D hexagonal color codes and connecting individual copies of the color code from different pairs through the measurement of nonlocal Bell-like stabilizers, we can then use gauge measurements as proposed in previous works~\cite{PR13, Bombin14, ADP14} to map the logical information initially stored in a 2D color code into a stacked code. This fault-tolerant transformation allows for the application of a transversal gate outside the Clifford group in a 2D layout without having to resort to magic state distillation.  Our proposal circumvents the Bravyi--K\"onig no-go result for transversal non-Clifford gates in 2D stabilizer codes by relying on a realistic form of nonlocal measurements along 1D boundaries in the 2D lattice. 

Due to the growing size of the joint boundary operators, the proposed scheme for fault-tolerant universal computation may not exhibit a threshold in contrast to traditional 3D gauge color codes~\cite{BNB15}. However, even if this were to be the case, it remains of interest to establish the value of the pseudothreshold for low distance realizations of this scheme for the purposes of near-future experiments as well as potential multilayered quantum error correcting architectures, as in Ref.~\cite{CDT09}. Moreover, the stacked codes merit further investigation into their stabilizer measurement properties, because 3D gauge color codes have the capacity for single-shot measurement~\cite{Bombin15b}. Further research into the development of schemes for nonlocal operations to map a 2D stabilizer code to a 3D code, such as the recent proposal in Ref.~\cite{BC15}, could lead to great reductions in architectural complexity and qubit overhead for the implementation of universal fault-tolerant quantum logic.

\section*{Acknowledgements}
T.~J. would like to acknowledge the support from the Vanier--Banting Secretariat and NSERC through the Vanier Canada Graduate Scholarship. T.~J. would like to thank the University of Sydney for their hospitality where part of this work was completed.  This work is supported by the ARC via the Centre of Excellence in Engineered Quantum Systems (EQuS) Project~No.~CE110001013, and the Intelligence Advanced Research Projects Activity (IARPA) Multi-Qubit Coherent Operations Program No.~W911NF-10-1-0330. T.~J. would like to thank Aleksander~Kubica for insightful discussions.

\appendix
\section{Proof of transversal logical $\pi/8$~gate for the stacked code}
\label{app:GateTransversality}

Consider a $[[n,1,d]]$ qubit 2D~color code whose $X$~and~$Z$~generators are labeled by~$\{ G_i^{(1)} \}$ and~$\{ H_i ^{(1)}\}$, respectively, that encodes a single logical qubit. Consider the following basis for the 2D~code based off the CSS~code construction (up to state normalization):
\begin{align*}
\ket{0_{2D}} &= \prod_i (I + G_i^{(1)}) \ket{0}^{\otimes n} = \sum_{\bm{g_x}} \ket{\bm{g_x}}, \\
\ket{1_{2D}} &= X_L^{(1)} \prod_i (I + G_i^{(1)}) \ket{0}^{\otimes n} = \sum_{\bm{g_x}} \ket{\oline{\bm{g_x}}},
\end{align*}
where $X_L^{(1)} = X^{\otimes n}$ is the~$X$ logical operator for the code and $\bm{g_x}$ is an $n$-bit binary vector that lies in the set of vectors generated by the operators~$\{ G_i \}$, and $\oline{\bm{g_x}} = \bm{g_x} \oplus (1, \hdots, 1)$.

Introduce a ($n+1$)-qubit ancillary system in the following state:
\begin{align}
\dfrac{1}{\sqrt{2}} ( \ket{0_{2D}}\ket{0} + \ket{1_{2D}}\ket{1} ).
\end{align}

The stabilizer generators of the original encoded state and the ancillary state thus correspond to:

\begin{widetext}
\vspace{10pt}
\begin{tabular}{p{8cm}||p{8cm}}
\centering 2D code + ancilla Bell state stabilizers
{\begin{align}
& \{ G_{P_i}^{(1)} \} \otimes I^{\otimes n} \otimes I  \nonumber \\    
&\{ H_{P_i}^{(1)} \} \otimes I^{\otimes n} \otimes I \nonumber \\
& I^{\otimes n}  \otimes  G_{P_i}^{(2)}  \otimes I \nonumber \\
& I^{\otimes n} \otimes  H_{P_i}^{(2)}  \otimes I \nonumber \\
&I^{\otimes n} \otimes X_L^{(2)} \otimes X \nonumber \\
&I^{\otimes n} \otimes Z_L^{(2)} \otimes Z \nonumber 
\end{align}}
&
\centering Equivalent stabilizers
{\begin{align}
&\{ G_{P_i}^{(1)} \} \otimes I^{\otimes n} \otimes I \\    
&\{ H_{P_i}^{(1)} \} \otimes I^{\otimes n} \otimes I \\
&\{ G_{P_i}^{(1)}  \otimes  G_{P_i}^{(2)} \} \otimes I \\
&\{ H_{P_i}^{(1)} \otimes  H_{P_i}^{(2)} \} \otimes I  \\
&I^{\otimes n} \otimes X_L^{(2)} \otimes X  \\
&I^{\otimes n} \otimes Z_L^{(2)} \otimes Z 
\end{align} } 
\end{tabular} \\
\end{widetext}
where the right stabilizer generators are equivalent to those on the left by multiplying lines 3~and~4 on the left by lines 1~and~2, respectively (using the notation $\{ G_i^{(1)} \otimes  G_i^{(2)} \}$ to signify that we are multiplying the corresponding $i^{\text{th}}$ stabilizer of each code with one another). Then by following a procedure similar to that proposed by Anderson~\textit{et al.}~\cite{ADP14}, one can replace the $X$~generators from the first line on the right by measuring appropriate gauge~$Z$ operators. to form a new $(2n+1)$-qubit code. The code remains a valid CSS code as the stabilizers all commute and satisfy the requirements of $C_2 \subset C_1$, where $C_1$ is the classical code whose parity check matrix is given by the~$X$ stabilizers~$\{ G_i^{(1)} \otimes  G_i^{(2)} \}$ and $C_2$ is the classical code whose parity check matrix is obtained from the~$Z$ stabilizers.

We proceed to show we can implement a logical gate from~$\CC_3$ transversally. We define an individual $Z$-axis rotation as follows: $Z(\theta) = \text{diag}[1,e^{i\pi\theta}]$. Suppose that the 2D~color code is chosen such that~$U_T = \otimes_{i=1}^n Z(\theta_i) = Z(\bm{\theta})$ implements a logical phase gate~$S_L = \text{diag}[1, i] \in \CC_2$ (the vector~$\bm{\theta}$ represents the individual rotations about the $Z$~axis on the physical qubits forming the quantum code). Note the following observation:
\begin{align}
U_T \ket{0_{2D}} &= Z(\bm{\theta}) \sum_{\bm{g_x}} \ket{\bm{g_x}} = \sum_{\bm{g_x}} e^{i \pi \bm{\theta} \cdot \bm{g_x}} \ket{\bm{g_x}} = \sum_{\bm{g_x}} \ket{\bm{g_x}} \nonumber \\
& \Longrightarrow e^{i \pi \bm{\theta} \cdot \bm{g_x}} = 1, \ \forall \ \bm{g_x}  \nonumber \\
& \Longrightarrow \bm{\theta} \cdot \bm{g_x} = 0 \mod 2, \ \forall \ \bm{g_x}, \label{eq:0stateCond}\\
U_T \ket{1_{2D}} &= Z(\bm{\theta}) \sum_{\bm{g_x}} \ket{\oline{\bm{g_x}}} = \sum_{\bm{g_x}} e^{i \pi \bm{\theta} \cdot \oline{\bm{g_x}}} \ket{\oline{\bm{g_x}}} = e^{i \pi /2}\sum_{\bm{g_x}} \ket{\oline{\bm{g_x}}} \nonumber \\
& \Longrightarrow e^{i \pi \bm{\theta} \cdot \oline{\bm{g_x}}} = e^{i \pi/2}, \ \forall \ \bm{g_x}  \nonumber \\
& \Longrightarrow \bm{\theta} \cdot \oline{\bm{g_x}} = \dfrac{1}{2} \mod 2, \ \forall \ \bm{g_x}. \label{eq:1stateCond}
\end{align}
The assumption that the transversal gate~$U_T$ implements a logical phase gate translates into conditions on the individual physical rotations~$\bm{\theta}$ coupled to the form of the binary vectors~$\bm{g_x}$ related to the $X$~generators of the 2D~quantum code. Consider the CSS~code proposed in Sec.~\ref{sec:TransformingCodes}, where the $X$~generators are given by,
\begin{align*}
\{ G_i^{(1)} &\otimes  G_i^{(2)} \}  \otimes  I \\
I^{\otimes n} &\otimes X_L^{(2)}  \otimes X,
\end{align*}
then a particular choice of code states can be obtained by the CSS~code construction as (upto state normalization):
\begin{align}
\ket{0_{3D}} &= (I^{\otimes (2n+1)}  + I^{\otimes n} \otimes X_L^{(2)}  \otimes X) \\
& \qquad \times \prod_i (I + G_i^{(1)} \otimes  G_i^{(2)} \otimes  I) \ket{0}^{\otimes (2n+1)} \nonumber \\
& = (I^{\otimes (2n+1)}  + I^{\otimes n} \otimes X_L^{(2)}  \otimes X) \sum_{\bm{g_x}} \ket{\bm{g_x}} \ket{\bm{g_x}} \ket{0} \nonumber \\
& = \sum_{\bm{g_x}} \Big(  \ket{\bm{g_x}} \ket{\bm{g_x}} \ket{0} +  \ket{\bm{g_x}} \ket{\oline{\bm{g_x}}} \ket{1} \Big), \\
\ket{1_{3D}} &= (X_L^{(1)} \otimes X_L^{(2)} \otimes X ) \ket{0_{2D}} \nonumber \\
&=  \sum_{\bm{g_x}} \Big(  \ket{\oline{\bm{g_x}}} \ket{\bm{g_x}} \ket{0} +  \ket{\oline{\bm{g_x}}} \ket{\oline{\bm{g_x}}} \ket{1} \Big).
\end{align}

\begin{claim}
\label{clm:TransversalC3}
The~$(2n+1)$~qubit transversal gate $V_T = Z(\frac{\bm{\theta}}{2})  \otimes Z(\frac{\bm{\theta}}{2}) \otimes Z(\alpha)$, where~$\alpha$ is chosen such that $\alpha \in \{ 1/4, 5/4 \}$, implements a logical~$T$ or $TZ$ gate in the logical computational basis~$\{ \ket{0_{3D}}, \ket{1_{3D}} \}$, where $T=\pi/8$~gate.
\end{claim}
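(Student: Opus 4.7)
The plan is a direct phase-counting argument on the explicit computational-basis expansions of $\ket{0_{3D}}$ and $\ket{1_{3D}}$ already written down just above the claim. Because $V_T$ is a tensor product of diagonal $Z$-rotations, it acts on each computational basis state $\ket{\bm{u}}\ket{\bm{v}}\ket{w}$ by a phase of the form $\exp\bigl[i\pi(\tfrac{1}{2}\bm{\theta}\cdot\bm{u}+\tfrac{1}{2}\bm{\theta}\cdot\bm{v}+\alpha w)\bigr]$. I would tabulate this phase for each of the four basis-state types that occur: $(\bm{g_x},\bm{g_x},0)$ and $(\bm{g_x},\oline{\bm{g_x}},1)$ in $\ket{0_{3D}}$, and $(\oline{\bm{g_x}},\bm{g_x},0)$ and $(\oline{\bm{g_x}},\oline{\bm{g_x}},1)$ in $\ket{1_{3D}}$. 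Using $\bm{\theta}\cdot\oline{\bm{g_x}}=\bm{\theta}\cdot\mathbf{1}-\bm{\theta}\cdot\bm{g_x}$, each phase reduces to a combination of $\bm{\theta}\cdot\bm{g_x}$, $\bm{\theta}\cdot\mathbf{1}$ and $\alpha$.

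Next, I would invoke equations \eqref{eq:0stateCond} and \eqref{eq:1stateCond}. The first kills the $\bm{g_x}$-dependence of the two ``same-register'' terms (phases $e^{i\pi\bm{\theta}\cdot\bm{g_x}}=1$ and $e^{i\pi\bm{\theta}\cdot\oline{\bm{g_x}}}=i$). Adding the two equations produces the key identity $\bm{\theta}\cdot\mathbf{1}\equiv 1/2\pmod{2}$ (with a sign that depends on whether $U_T$ realises $S_L$ or $S_L^{\dagger}$); this is what makes the ``mixed-register'' phases $e^{i\pi(\bm{\theta}/2)\cdot\bm{g_x}}e^{i\pi(\bm{\theta}/2)\cdot\oline{\bm{g_x}}}=e^{i\pi\bm{\theta}\cdot\mathbf{1}/2}$ also $\bm{g_x}$-independent. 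Thus every term in $V_T\ket{0_{3D}}$ (resp.\ $V_T\ket{1_{3D}}$) acquires a common phase depending only on $\alpha$ and on the integer $K$ defined by $\bm{\theta}\cdot\mathbf{1}=1/2+2K$. This establishes that $V_T$ preserves the code subspace and acts diagonally in the logical basis $\{\ket{0_{3D}},\ket{1_{3D}}\}$; unitarity of $V_T$ then makes the global code-space preservation automatic.

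Finally, I would impose the matching condition that the two phases inside $\ket{0_{3D}}$ agree, which selects $\alpha\bmod 2$ (one of the two values in $\{1/4,5/4\}$, the parity of $K$ dictating which), and then read off the ratio $c_1/c_0$ of the overall phase on $\ket{1_{3D}}$ to that on $\ket{0_{3D}}$. A short calculation shows this ratio equals $e^{i\pi\bm{\theta}\cdot\mathbf{1}/2}\cdot e^{i\pi\alpha}/e^{i\pi\bm{\theta}\cdot\mathbf{1}}$, which simplifies to $\pm e^{i\pi/4}$ using the mod-$2$ value of $\bm{\theta}\cdot\mathbf{1}$; the $+$ case gives a logical $T$ and the $-$ case gives $TZ$, with the two signs corresponding to the two admissible values of $\alpha$. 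The main obstacle, and essentially the only nontrivial step, is the careful mod-$4$ bookkeeping of the parity of $K$: it is this parity that couples the choice of $\alpha\in\{1/4,5/4\}$ to whether the realised logical gate is $T$ or $TZ$, and one must be careful not to conflate the two. Once that parity is tracked, the rest of the argument is just phase arithmetic identical in spirit to the Steane-code calculation of Anderson \emph{et al.}~\cite{ADP14}.
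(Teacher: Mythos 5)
Your proposal is correct and follows essentially the same route as the paper's proof: a direct phase computation on the CSS-basis expansions of $\ket{0_{3D}}$ and $\ket{1_{3D}}$, using conditions \eqref{eq:0stateCond}--\eqref{eq:1stateCond} to fix the same-register phases and the combination $\bm{\theta}\cdot\bm{g_x}+\bm{\theta}\cdot\oline{\bm{g_x}}$ to fix the mixed-register ones, then matching $\alpha$ to the resulting value in $\{1/4,5/4\}$ and reading off the relative phase $\pm e^{i\pi/4}$ on $\ket{1_{3D}}$. Your observation that the $\bm{g_x}$-independence of the mixed phase follows from the exact identity $\bm{\theta}\cdot\bm{g_x}+\bm{\theta}\cdot\oline{\bm{g_x}}=\bm{\theta}\cdot\mathbf{1}$ is in fact slightly cleaner than the paper's appeal to ``the symmetries of color codes,'' which only pins that quantity down modulo~$2$.
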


\begin{proof}
For the purpose of this proof, we consider the case where the $\pi/8$~gate has the form~$T = \text{diag}[1, e^{i \pi/4}]$, which is equivalent to $\text{diag}[e^{-i \pi/8}, e^{i \pi/8}]$ up to a global phase. Consider first the action of $V_T$ upon the state~$\ket{0_{3D}}$ which should return~$\ket{0_{3D}}$ without a phase.
\begin{align}
V_T \ket{0_{3D}} &= \sum_{\bm{g_x}} \Big( e^{i \pi \frac{\bm{\theta}}{2} \cdot \bm{g_x}} e^{i \pi \frac{\bm{\theta}}{2} \cdot \bm{g_x}} \ket{\bm{g_x}} \ket{\bm{g_x}} \ket{0} \nonumber \\
&\qquad + e^{i \pi \frac{\bm{\theta}}{2} \cdot \bm{g_x}} e^{i \pi \frac{\bm{\theta}}{2} \cdot \oline{\bm{g_x}}} e^{-i \pi \alpha } \ket{\bm{g_x}} \ket{\oline{\bm{g_x}}} \ket{1} \Big) \\
&= \sum_{\bm{g_x}} \Big( e^{i \pi \bm{\theta} \cdot \bm{g_x}} \ket{\bm{g_x}} \ket{\bm{g_x}} \ket{0} \nonumber \\
&\qquad + e^{i \pi \frac{\bm{\theta}}{2} \cdot \bm{g_x}} e^{i \pi \frac{\bm{\theta}}{2} \cdot \oline{\bm{g_x}}} e^{-i \pi \alpha } \ket{\bm{g_x}} \ket{\oline{\bm{g_x}}} \ket{1} \Big) \label{eq:coeff0}\\
&= \sum_{\bm{g_x}} \Big(\ket{\bm{g_x}} \ket{\bm{g_x}} \ket{0} + \ket{\bm{g_x}} \ket{\oline{\bm{g_x}}} \ket{1} \Big), \label{eq:0stateFinal}
\end{align}
where the first coefficient in~\eqref{eq:coeff0} is equal to 1 by the identity in Eq.~\ref{eq:0stateCond}, and the second coefficient is equal to 1 by the following observation. Define the phase~$a$ to be the phase $e^{i \pi a} = e^{i \pi \frac{\bm{\theta}}{2} \cdot \bm{g_x}} e^{i \pi \frac{\bm{\theta}}{2} \cdot \oline{\bm{g_x}}}$. Due to the symmetries of color codes, the value of~$a$ in the following is independent of~$\bm{g_x}$:
\begin{align}
& \dfrac{\bm{\theta}}{2} \cdot \bm{g_x} + \dfrac{\bm{\theta}}{2} \cdot \oline{\bm{g_x}} = a \mod 2 \nonumber \\
\Longrightarrow &\ \bm{\theta} \cdot \bm{g_x} + \bm{\theta} \cdot \oline{\bm{g_x}}  = 2a \mod 2 \nonumber \\
\Longrightarrow &\ 0 + \dfrac{1}{2} = 2a \mod 2 \nonumber \\
\Longrightarrow &\ a = \{ \dfrac{1}{4}, \dfrac{5}{4} \} \mod 2, \label{eq:phase}
\end{align}
therefore $\alpha$ is chosen in order to set the coefficient equal to 1. Consider now the action of~$V_T$, with the appropriate choice of~$\alpha$ for the state~$\ket{1_{3D}}$, which should return the state~$\pm e^{i \pi/4} \ket{1_{3D}}$.
\begin{align}
V_T \ket{1_{3D}} &=  \sum_{\bm{g_x}} \Big( e^{i \pi \frac{\bm{\theta}}{2} \cdot \oline{\bm{g_x}}} e^{i \pi \frac{\bm{\theta}}{2} \cdot \bm{g_x}}  \ket{\oline{\bm{g_x}}} \ket{\bm{g_x}} \ket{0} \nonumber \\
&\qquad + e^{i \pi \frac{\bm{\theta}}{2} \cdot \oline{\bm{g_x}}} e^{i \pi \frac{\bm{\theta}}{2} \cdot \oline{\bm{g_x}}}  e^{-i \pi \alpha } \ket{\oline{\bm{g_x}}} \ket{\oline{\bm{g_x}}} \ket{1} \Big),
\end{align}
which given a choice of~$\alpha$ gives the following:
\begin{align}
V_T \ket{1_{3D}} &= \sum_{\bm{g_x}} \Big( e^{i \pi \alpha} \ket{\oline{\bm{g_x}}} \ket{\bm{g_x}} \ket{0} + e^{i \pi (\frac{1}{2} - \alpha)} \ket{\oline{\bm{g_x}}} \ket{\oline{\bm{g_x}}} \ket{1} \Big), \nonumber \\
& = e^{i \pi \alpha} \ket{1_{3D}} = \pm e^{i \pi/4} \ket{1_{3D}} ,
\end{align}
since $\alpha  = \dfrac{1}{2} - \alpha \mod 2$.
\end{proof}

Therefore we can apply a transversal $\pi/8$~gate to the code construction given above by applying a transversal~logical $Z$ gate at the completion of our gate (the action of $T$ or $TZ$ is fixed by the code and is not probabilistic).

\begin{corollary}
The stacked code has a transversal logical $\pi/8$~gate.
\end{corollary}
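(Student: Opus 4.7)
My plan is to prove the corollary by induction on the stacked code distance~$d$, in steps of two (since the stacked code is defined for odd~$d$). The base case $d=3$ is exactly Claim~\ref{clm:TransversalC3} applied to the $(2n+1)$-qubit code described in Sec.~\ref{sec:TransformingCodes}. For the inductive step, I would exploit the recursive nature of the construction in Sec.~\ref{sec:TransformingCodes}: the $(d-1)+1$ stacked code arises from the distance-3 construction by substituting the single ancilla qubit with an encoded logical qubit carried by an inner $(d-3)+1$ stacked code, now coupled to layer~2 via the Bell stabilizers $X_L^{(2)} X_L^{(3)}$ and $Z_L^{(2)} Z_L^{(3)}$ in place of $X_L^{(2)}\otimes X$ and $Z_L^{(2)}\otimes Z$.

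The key observation is that in the proof of Claim~\ref{clm:TransversalC3} the ancilla enters only through (i)~its Pauli operators appearing in the Bell stabilizers, and (ii)~the rotation $Z(\alpha)$ with $\alpha\in\{1/4,5/4\}$ that adjusts the phase in Eq.~\eqref{eq:phase}. If the bare ancilla is replaced by the logical qubit of an inner $(d-3)+1$ stacked code, its Pauli operators become the inner logical Paulis $X_L^{(3)}$ and $Z_L^{(3)}$, and the required $Z(\alpha)$ is exactly a logical $T$ or $TZ$ on that inner code. By the inductive hypothesis, the inner code admits a transversal $\pi/8$, i.e.~a product of physical single-qubit $Z$-rotations implementing logical $T$ or $TZ$; whichever of the two is realized can be matched to the required $\alpha$ by appending, if necessary, a transversal logical~$Z$, which is freely available in any CSS code. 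Composing this inner transversal rotation with the $Z(\bm{\theta}/2)$ rotations on layers~1 and~2 gives a product of single-qubit $Z$-rotations on all $(d-1)n+1$ physical qubits of the $(d-1)+1$ stacked code, which by the calculation of Claim~\ref{clm:TransversalC3} implements logical $T$ or logical $TZ$ on its encoded qubit. A final transversal $Z$ correction, if needed, yields the logical $\pi/8$.

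The step I expect to be the main obstacle is verifying that the basis states $\ket{0_{(d-1)+1}}$ and $\ket{1_{(d-1)+1}}$ admit a decomposition in which the tensor factors corresponding to layers~3 through $d-1$ plus the single ancilla appear exactly as the basis states $\ket{0_{(d-3)+1}}$ and $\ket{1_{(d-3)+1}}$ of the inner code, so that the calculation of Claim~\ref{clm:TransversalC3} transplants verbatim with ``ancilla'' meaning ``inner logical qubit''. Concretely, I would derive this decomposition directly from the CSS stabilizer presentation in Eqs.~\eqref{eq:XcubeGen}--\eqref{eq:ZLBell}; this yields a nested sum-over-codewords structure mirroring the recursion, but the bookkeeping of the parity constraints induced by the chain of Bell stabilizers is the technically heaviest part. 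Once that decomposition is in hand, the two conditions analogous to Eqs.~\eqref{eq:0stateCond}--\eqref{eq:1stateCond} factorize across the recursion, the phase analysis of Claim~\ref{clm:TransversalC3} applies at each level, and the corollary follows by induction.
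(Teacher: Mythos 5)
Your proposal is correct and follows essentially the same route as the paper: the paper's own proof is exactly the recursion you describe, observing that Claim~\ref{clm:TransversalC3} uses the ancilla only through its role in the Bell stabilizers and the $Z(\alpha)$ rotation, and then replacing the bare ancilla by the inner stacked code's logical qubit and $Z(\alpha)$ by the inner code's transversal rotation. Your write-up is merely more explicit about the bookkeeping the paper leaves implicit (the nested codeword decomposition and matching $T$ versus $TZ$ via a transversal logical $Z$ correction).
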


\begin{proof}
The only assumption the proof of Claim~\ref{clm:TransversalC3} makes about the ancilla state is that the rotation~$Z(\alpha)$ induces a phase of $e^{i \pi \alpha}$ on the~$\ket{1}$ state and leaves the $\ket{0}$ state invariant. Therefore, we replace the single physical qubit by a logical qubit~$\{ \ket{0_{3D}}, \ket{1_{3D}} \}$ prepared in a 3D state according to the construction laid out in this appendix. Replacing the single~$Z$ of angle~$\alpha$ by a transversal rotation as given by the construction of the previous claim, we can recursively build the stacked code to implement an overall transversal rotation of the $\pi/8$~gate for the stacked code.
\end{proof}

\bibliographystyle{ieeetr}
\bibliography{bibtex_jochym}

\end{document}